\newtheorem{definition}{Definition}
\newtheorem{theorem}{Theorem}
\newtheorem{lemma}{Lemma}
\newtheorem{remark}{Remark}
\renewcommand{\vec}{\mathbf}
\newcommand{\cmark}{\ding{51}}%
\newcommand{\xmark}{\ding{55}}%
\title{The Adjusted Winner Procedure: Characterizations and Equilibria}
\author{
\textbf{Haris Aziz}\\
\small{NICTA and University of New South Wales, Australia}\\
\small{\texttt{haris.aziz@nicta.com.au}}
\and
\textbf{Simina Br\^anzei}\\
\small{Hebrew University of Jerusalem, Israel}\\
\small{\texttt{simina.branzei@gmail.com}}\\
\newline
\and
\textbf{Aris Filos-Ratsikas}\\
\small{Oxford University, United Kingdom}\\
\small{\texttt{aris.filos-ratsikas@cs.ox.ac.uk}}
\and
\textbf{S{\o}ren Kristoffer Stiil Frederiksen}\\
\small{Aarhus University, Denmark}\\
\small{\texttt{ssf@cs.au.dk}}
}
\date{}
\begin{document}
\maketitle

\begin{abstract}
The Adjusted Winner procedure is an important fair division mechanism proposed by Brams and Taylor for allocating goods between two parties. It has been used in practice for divorce settlements and analyzing political disputes. Assuming truthful declaration of the valuations, it computes an allocation that is \emph{envy-free}, \emph{equitable} and \emph{Pareto optimal}.

We show that Adjusted Winner admits several elegant characterizations, which further shed light on the outcomes reached with strategic agents. We find that
the procedure may not admit pure Nash equilibria in either the discrete or continuous variants, but is guaranteed to have $\epsilon$-Nash equilibria for each $\epsilon > 0$. Moreover, under \emph{informed} tie-breaking, exact pure Nash equilibria always exist, are Pareto optimal, and their social welfare is at least $3/4$ of the optimal.
\end{abstract}

\thispagestyle{empty}
\setcounter{page}{0}

\newpage

\section{Introduction}
The \emph{Adjusted Winner} procedure was introduced by Brams and Taylor (\cite{BT96b}) as a highly desirable mechanism for allocating 
\emph{multiple divisible resources} among two parties. The procedure requires the participants to declare their preferences over the items and the outcome satisfies strong fairness and efficiency properties.
	 Adjusted Winner has been advocated as a fair division rule for divorce settlements~\cite{BT96b}, international border conflicts~\cite{TaPa08a}, political issues~\cite{DeBr97a,Mass00a}, real estate disputes~\cite{Levy99a}, water disputes~\cite{Mada10a}, deciding debate formats~\cite{Lax99a} and various  negotiation settings~\cite{BrTa00a,Rait00a}. %BrTa96a
	 For example, it has been shown that the agreement reached during Jimmy Carter's presidency between Israel and Egypt is very close to what Adjusted Winner would have predicted~\cite{BrTo96a}.
Adjusted Winner has been patented by New York University and licensed to the law firm Fair Outcomes, Inc~\cite{KKP14a}.

	 Although the merits of Adjusted Winner have been discussed in a large body of literature, the procedure is still not fully understood theoretically. We provide two novel characterizations, together with an alternative interpretation that turns out to be very useful for analyzing the procedure.
	 
	 Moreover, as observed already in \cite{BT96a}, the procedure is susceptible to manipulation. However, fairness and efficiency are only guaranteed when the participants declare their preferences honestly. In a review of a well-known book on Adjusted Winner by Brams and Taylor \cite{BrTa00a}, Nalebuff \cite{Nalebuff} highlights the need for research in this direction:  
	 \begin{quote}
	 	\emph{..thus we have to hypothesize how they (the players) would have played the game and where they would have ended up.}
	 \end{quote}
	 We answer these questions by studying the existence, structure, and properties of pure Nash equilibria of the procedure. Until now, our understanding of the strategic aspects has been limited to the case of two items \cite{BT96a} and experimental predictions \cite{daniel2005fair}; our work identifies conditions under which Nash equilibria exist and provides theoretical guarantees for the performance of the procedure in equilibrium.
	 
	 \subsection{Contributions} 
	 
%and in the case Adjusted Winner, it disallows manipulation by a player by scaling down his bids.
	 
		%An allocation is maxmin if it maximizes the minimum utility of the players. 

% The following theorem is probably proven in one of Brahm's and Taylor's books.
%
% \begin{theorem}
% Assuming truthful reporting, the allocation produced by the Adjusted Winner mechanism is envy-free, equitable and Pareto optimal.
% \end{theorem}

% \simina{Include more references, there are quite a few papers citing this mechanism, some of which in ``real'' contexts, such as the Israeli-Palestinian conflict, or how the agreement reached during Jimmy Carter's presidency between Israel and Egypt is very close to what Adjusted Winner would have predicted}

% \cite{EKH01a,HoLll10a}
%
%
% \haris{Recent AAAI reference on EF allocations: \cite{KKP14a}}
%
%
% \haris{Need to motivate WHY identifying PNE is interesting/important. Something on the following lines. Since Adjusted Winner is not strategyproof, players may best respond in order to obtain a better outcomes. This raises the question about existence, structure, and properties of pure Nash equilibrium under Adjusted Winner}

We start by presenting the first characterizations of Adjusted Winner. We show that among all protocols that split at most one item, it is the only one that satisfies Pareto-efficiency and equitability. Under the same condition, we further show that it is equivalent to the protocol that always outputs a maxmin allocation.
		%In the first characterization, Adjusted Winner is the only Pareto optimal and equitable procedure for two players in which  a maximum of one item needs to be divided partially. 
		%In the second characterization, we show that Adjusted Winner is equivalent to the maxmin rule which %requires a maximum of one item to be divided partially.
		%Finally, we show that if the ratio of utilities for any two items are different, then the only Pareto optimal and equitable allocation is the result of Adjusted Winner.
		
Next, we obtain a complete picture for the existence of pure Nash equilibria in Adjusted Winner. We find the following that neither the discrete nor the continuous variants of the procedure are guaranteed to have pure Nash equilibria. However, the continuous variant of the procedure has $\epsilon$-Nash equilibria, for every $\epsilon>0$, while the discrete variant has $\epsilon$-equilibria when the number of points is chosen appropriately, in a way that allows the players to sketch their valuations precisely enough. Additionally, under \emph{informed} tie-breaking, exact pure Nash equilibria always exist for both variants of the procedure. 
		%We find the following: neither the discrete nor the continuous variants of Adjusted Winner are guaranteed to have pure Nash equilibria. However, in the continuous case, for each $\epsilon > 0$, there exists
		%an $\epsilon$-equilibrium; in the discrete case, for each instance of the valuations and each $\epsilon > 0$, the center can set the number of points large enough such that the protocol has $\epsilon$-equilibria.
		
Finally, we prove that the pure Nash equilibria of Adjusted Winner are envy-free and Pareto optimal with respect to the true valuations and that their social welfare is at least $3/4$ of that of the intended outcome of the procedure.
			%Adjusted Winner for two items and discrete strategies always has a pure Nash equilibrium. 
	%	\haris{Isn't PNE envy-free even if the number of items is not 2??}
	%	Any such PNE is envy-free and Pareto optimal. The Price of Anarchy of the Adjusted Winner procedure is 4/3.
Our results concerning the existence or non-existence of pure Nash equilibria are summarized in Table~\ref{tab:summary}.

\begin{table}
\begin{center}
\begin{tabular}{|c||c|c|}
\hline
%%%\begin{tabular}{@{}c@{}}\small{\emph{Lexicographic}} \\ \small{\emph{tie-breaking}}\end{tabular}& \small{Social Cost}& \small{Max Cost}\\
\begin{tabular}{@{}c@{}}\small{\emph{Continuous}} \\ \small{\emph{Procedure}}\end{tabular} & \begin{tabular}{@{}c@{}}\small{\emph{Lexicographic}} \\ \small{\emph{tie-breaking}}\end{tabular} & \begin{tabular}{@{}c@{}}\small{\emph{Informed}} \\ \small{\emph{tie-breaking}}\end{tabular}\\
\hline\hline
\small{pure Nash} & \xmark & \cmark \\
\hline\hline
\small{$\epsilon$-Nash} & \cmark & \cmark \\
\hline
\end{tabular}
\quad
\begin{tabular}{|c||c|c|}
\hline
\begin{tabular}{@{}c@{}}\small{\emph{~~~~Discrete}} \\ \small{\emph{Procedure}}\end{tabular}& \begin{tabular}{@{}c@{}}\small{\emph{Lexicographic}} \\ \small{\emph{tie-breaking}}\end{tabular}  & \begin{tabular}{@{}c@{}}\small{\emph{Informed}} \\ \small{\emph{tie-breaking}}\end{tabular} \\
\hline\hline
\small{pure Nash} & \xmark & \cmark \\
\hline\hline
\small{$\epsilon$-Nash} &  \cmark $^{(*)}$ & \cmark  \\
\hline
\end{tabular}
\end{center}
\caption{Existence of pure Nash equilibria in Adjusted Winner. The (*) result holds when the number of points is chosen appropriately.}
\label{tab:summary}
\end{table}

% \haris{
% Cite \cite{BCD+14a} in relation to recent equilibria results in AI}
%

\section{Background}
We begin by introducing the classical fair division model for which the Adjusted Winner procedure was developed~\cite{BT96a}.
Let there be two players, Alice and Bob, that are trying to split a set 
$M = \{1, \ldots, m\}$ of divisible items. The players have preferences over the items given by numerical values that express their level of satisfaction.
Formally, let $\mathbf{a}=(a_1,a_2,\ldots,a_m)$ and $\vec{b} = (b_1, \ldots, b_m)$ 
denote their \emph{valuation vectors}, where $a_j$ and $b_j$ are the values assigned by Alice and Bob to item $j$, respectively.

An \emph{allocation} $W = (W_{A}, W_{B})$ is an assignment of fractions of items (or \emph{bundles}) to the players, 
where $W_{A} = (w_{A}^{1},\ldots,w_{A}^{m}) \in [0,1]^m$ and $W_B = (w_{B}^1,\ldots,w_{B}^m) \in [0,1]^m$ are the 
allocations of Alice and Bob, respectively. 

%\begin{definition}[Utility]
The players have additive \emph{utility} over the items. Alice's utility for a bundle $W_{A}$, given that her valuation is $\vec{a}$, is: 
$u_{\mathbf{a}}(W_{A})=\sum_{j \in M} a_j \cdot w_{A}^j$. Bob's utility is defined similarly. 
%\end{definition}
The players are weighted equally, such that their utility for receiving all the resources is the same: $$\sum_{i\in M} a_i = \sum_{i \in M} b_i.$$

There are two main settings studied in this context: \emph{discrete} and \emph{continuous} valuations.
In the discrete setting, valuations are positive natural numbers that add up to some integer $P$ and can be interpreted as \emph{points} (or coins of equal size) that the 
players use to acquire the items. For ease of notation, we will consider the equivalent interpretation of valuations as rationals with common denominator $P$, where the valuations sum to $1$. 
In the \emph{continuous} setting, the valuations are positive real numbers, which are without loss of generality normalized to sum to $1$. These normalizations make procedures invariant to any rescaling of the bids~\cite{KKP14a,BFM+12a}.

\subsection{The Adjusted Winner Procedure}

%\paragraph{The Adjusted Winner procedure}
The Adjusted Winner procedure works as follows. Alice and Bob are asked by a mediator to state their valuations $\mathbf{a}$ and $\mathbf{b}$, after which the next two phases are executed.

\begin{quote}
\textbf{\emph{Phase 1:}} For every item $i$, if $a_i > b_i$ then give the item to Alice; otherwise give it to Bob. The resulting allocation is $(W_A, W_B)$ and without loss of generality,
$u_{\vec{a}}(W_A) \geq u_{\vec{b}}(W_B)$.

\textbf{\emph{Phase 2:}} 
Order the items won by Alice increasingly by the \emph{ratio} $a_i/b_i$: $\frac{a_{k_1}}{b_{k_1}} \leq \ldots \leq \frac{a_{k_r}}{b_{k_r}}$. 
From left to right, continuously transfer fractions of items from Alice to Bob, until an allocation $(W_A', W_B')$ where both players have the same utility is produced:
$u_{\vec{a}}(W_A') = u_{\vec{b}}(W_B')$.
\end{quote}

Let $AW(\vec{a}, \vec{b})$ denote the allocation produced by Adjusted Winner on inputs $(\vec{a}, \vec{b})$, where $AW_A(\vec{a}, \vec{b})$ and $AW_B(\vec{a}, \vec{b})$
are the bundles received by Alice and Bob.
Note that the procedure is defined for strictly positive valuations, so the ratios are finite and strictly positive numbers.
Examples can be found on the Adjusted Winner website\footnote{http://www.nyu.edu/projects/adjustedwinner/.} as well as in~\cite{BT96a}.

Adjusted Winner produces allocations that are \emph{envy-free}, \emph{equitable}, \emph{Pareto optimal}, and \emph{minimally fractional}.
An allocation $W$ is said to be Pareto optimal if there is no other allocation that strictly improves one player's utility without degrading the other player.
%such that $u_i(W_i') \geq u_i(W_i)$ for each $i \in \mathbf{a,b}$ and $u_i(W_i') \geq u_i(W_i)$ for at least one of them.
Allocation $W$ is equitable if the utilities of the players are equal: $u_{\vec{a}}(W_A) = u_{\vec{b}}(W_B)$, envy-free if no player would prefer the other player's bundle, 
and minimally fractional if at most one item is split.

Envy-freeness of the procedure implies \emph{proportionality}, where an allocation is 
proportional if each player receives a bundle worth at least half of its utility for all the items.
A procedure is called envy-free if it always outputs an envy-free allocation (similarly for the other properties).

\section{Characterizations}\label{sec:characterizations}

In this section, we provide two characterizations of Adjusted Winner\footnote{
The results here refer to the case when the players report their true valuations to the mediator. We discuss the strategic aspects of the procedure in Section \ref{sec:strategic}.}
for both the discrete and continuous variants. We
begin with
a different interpretation of the procedure that is useful for analyzing its properties.

An allocation is \emph{ordered} if it can be produced by sorting the items in decreasing order of the valuation ratios $a_i/b_i$ and placing a boundary line somewhere (possibly splitting an item), such that Alice gets 
the entire bundle to the left of the line and Bob gets the remainder:
%%%over the ordered items such that each player gets the items for which they have relatively higher reports. 
$$\underbrace{\frac{a_{k_1}}{b_{k_2}}\geq \frac{a_{k_2}}{b_{k_2}}\geq \cdots \geq\frac{a_{k_i}}{b_{k_i}}\geq}_{\text{Alice's allocation}}  \bigg|\geq \underbrace{\frac{a_{k_{i+1}}}{b_{k_{i+1}}}\geq\cdots  \geq\frac{a_{k_m}}{b_{k_m}}}_{\text{Bob's allocation}}$$
The placement of the boundary line could lead either to an integral or a minimally fractional allocation. Note that the allocation that gives all the items to Alice is also ordered (but admittedly unfair).

It is clear to see that Adjusted Winner produces an ordered allocation (using some tie-breaking rule for items with equal ratios) with the property that the boundary line is appropriately placed to guarantee equitability. This is the way we will be interpreting the procedure for the remainder of the paper. We start by characterizing Pareto optimal allocations.
%
%
%first produces an ordering of ratios as the one above (using some tie-breaking rule) and then appropriately placing the boundary line in order to guarantee equitability, possibly fractionally assigning a single item. 
%We will use this interpretation throughout the paper.

%%% SIMINA:: Stating explicitely so that the readers get a reminder of how awesome the results are ;) --->> Since there is no ambiguity on the solution concept, we will will omit the term "pure" when referring to equilibria. 

% Note that AW results in a unique allocation if each $x_i/y_i$ is different for each item $i$. In case the ratio $x_i/y_i=x_j/y_j$ for items $i \neq j$, then the AW rule could return different possible allocations. Even though AW could result in different allocations, because of AW being pareto optimal and equitable the number of points received by the two players is the same in all of the possible allocations.

\begin{lemma}\label{lemma:cond-iff-notPO}
For any valuations $(\vec{a}, \vec{b})$ and any tie-breaking rule, an allocation $W$
is \emph{not} Pareto optimal if and only if there exist items $i$ and $j$ such that Alice gets a non-zero fraction (possibly whole) of $j$, 
Bob gets a non-zero fraction (possibly whole) of $i$, and $a_i b_j > a_j b_i$.
\end{lemma}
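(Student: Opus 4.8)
The plan is to prove the two directions separately, after first rewriting the arithmetic condition $a_ib_j > a_jb_i$ in its more suggestive ratio form $a_i/b_i > a_j/b_j$ (legitimate since all valuations are strictly positive); throughout I write $\rho_k = a_k/b_k$.

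For the \emph{if} direction I would exhibit an explicit Pareto-improving exchange. Given items $i,j$ with $w_B^i > 0$, $w_A^j > 0$, and $a_i/b_i > a_j/b_j$, I transfer a small amount $\epsilon$ of item $i$ from Bob to Alice and a small amount $\delta$ of item $j$ from Alice to Bob. Feasibility holds for all sufficiently small $\epsilon,\delta > 0$ because both fractions are strictly positive. Alice's utility changes by $a_i\epsilon - a_j\delta$ and Bob's by $b_j\delta - b_i\epsilon$, so both strictly increase precisely when $(b_i/b_j)\epsilon < \delta < (a_i/a_j)\epsilon$; such a $\delta$ exists exactly because $a_i/b_i > a_j/b_j$ gives $b_i/b_j < a_i/a_j$. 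The resulting allocation then Pareto-dominates $W$.

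For the \emph{only if} direction I would argue the contrapositive: assuming no such pair $(i,j)$ exists, I show $W$ is Pareto optimal. The hypothesis says every item $i$ that Bob holds in positive fraction and every item $j$ that Alice holds in positive fraction satisfy $\rho_i \le \rho_j$; hence there is a threshold ratio $\rho^*$ weakly separating Bob's items (ratio $\le \rho^*$) from Alice's (ratio $\ge \rho^*$), where a split item lies in both groups and so sits exactly at $\rho^*$. Take any alternative complete allocation $W'$ and let $\Delta_k$ be the change in Alice's fraction of item $k$ in passing from $W$ to $W'$; since $w_B^k = 1 - w_A^k$, Bob's change is $-\Delta_k$. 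The sign of $\Delta_k$ pins down the group: $\Delta_k > 0$ forces $w_A^k < 1$, so $k$ is Bob-held and $\rho_k \le \rho^*$, while $\Delta_k < 0$ forces $w_A^k > 0$, so $k$ is Alice-held and $\rho_k \ge \rho^*$. In every case $(\rho_k - \rho^*)\,b_k\,\Delta_k \le 0$, and summing (using $a_k = \rho_k b_k$) yields $\sum_k a_k\Delta_k \le \rho^* \sum_k b_k\Delta_k$. If $W'$ were a Pareto improvement we would have $\sum_k a_k\Delta_k \ge 0$ and $\sum_k b_k\Delta_k \le 0$; combined with $\rho^* > 0$ this squeezes both utility changes to zero, contradicting strictness for at least one player. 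Hence no improvement exists and $W$ is Pareto optimal.

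I expect the \emph{only if} direction to be the main obstacle — specifically, isolating the separating threshold $\rho^*$ and carrying out the sign bookkeeping so that the single inequality $\sum_k a_k\Delta_k \le \rho^*\sum_k b_k\Delta_k$ falls out uniformly across all items. Two points need care: split items lie in both groups and must therefore sit exactly at $\rho^*$, so the threshold is well defined; and the argument tacitly uses that allocations are complete ($w_A^k + w_B^k = 1$). A partially unallocated item is handled separately, but positivity of the valuations makes any incomplete allocation trivially improvable, so this is not a genuine gap.
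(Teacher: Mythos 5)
Your proof is correct. The easy direction coincides with the paper's: both exhibit the same two-sided exchange, with the transferred amounts chosen so that $(b_i/b_j)\epsilon < \delta < (a_i/a_j)\epsilon$, which is possible exactly when $a_i b_j > a_j b_i$. Your hard direction, however, is organized genuinely differently. The paper argues directly: assuming $W$ is not Pareto optimal, it fixes an improving exchange, lets $S_A$ (resp.\ $S_B$) be the items Alice (resp.\ Bob) gives away, picks $j \in S_A$ of minimal ratio and $i \in S_B$ of maximal ratio, and claims that if even this extremal pair fails the inequality, then every item leaving a player has weakly larger ratio (for that player) than every item entering, so no Pareto improvement is possible --- but that final step is asserted rather than computed. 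You instead prove the contrapositive: from the absence of a bad pair you extract a global separating threshold $\rho^*$ (well defined because any split item is held by both players, forcing all split items to share one ratio), then for an arbitrary alternative complete allocation you run the sign bookkeeping $(\rho_k - \rho^*)\,b_k\,\Delta_k \le 0$, sum to get $\sum_k a_k \Delta_k \le \rho^* \sum_k b_k \Delta_k$, and squeeze both utility changes to zero. Your version buys rigor: that summation inequality is precisely the computation the paper's last sentence gestures at but never performs, and it applies to arbitrary deviations rather than to transfers pre-packaged into disjoint exchange sets. The paper's version buys brevity and makes the witnessing pair $(i,j)$ explicit. One caveat on your closing remark: completeness ($w_A^k + w_B^k = 1$) is not something that can be ``handled separately'' --- the lemma is simply false for allocations that discard part of an item (with a single item, Alice holding half and the rest thrown away is not Pareto optimal, yet no pair satisfies $a_i b_j > a_j b_i$), so completeness is an implicit standing hypothesis; the paper's proof assumes it tacitly as well, so this does not disadvantage your argument relative to the paper's.
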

\begin{proof}
($\impliedby$) If such items $i,j$ exist, then consider the exchange in which Bob gives $\lambda_i>0$ of item $i$ to Alice and Alice gives $\lambda_j>0$ of item $j$ to Bob, where:
$$\frac{b_i}{b_j}\lambda_i <\lambda_j < \frac{a_i}{a_j}\lambda_i$$
Since $a_i / a_j > b_i/b_j$, such $\lambda_i$ and $\lambda_j$ do exist. 
%Now let us look at the change in utility.
Then Alice's net change in utility is:
$$
a_i\lambda_i - a_j\lambda_j > a_i\lambda_i - a_j\frac{a_i}{a_j}\lambda_i = 0,
$$
while Bob's net change is: %$y_j\lambda_j-x_y\lambda_i$. 	
$$
b_j\lambda_j-b_i\lambda_i > b_j\lambda_j-b_i(\lambda_j\frac{b_j}{b_i}) > b_j\lambda_j-{b_j}\lambda_j = 0.
$$
Thus the allocation is not Pareto optimal. \\

($\implies$) If the allocation $W$ is not Pareto optimal, then Alice and Bob can exchange positive fractions of items to get a Pareto improvement. 

Consider such an exchange and let $S_{A}$ be the set of items for which positive fractions are given by Alice to Bob. Let $S_{B}$ be defined similarly for Bob. Without loss of generality, $S_{A}$ and $S_{B}$ are disjoint; otherwise we could just consider the net transfer of any items that are in both $S_{A}$ and $S_{B}$. 
Let $j \in S_{A}$ be the item with the lowest ratio $a_j/b_j$, and $i \in S_{B}$ with the highest ratio $a_i/b_i$. 

If $a_i b_j> a_j b_i$ then we are done. Otherwise, assume by contradiction that for each item $k \in S_{A}$ and $l \in S_{B}$ it holds $a_k b_l\geq a_l b_k$. 
Then $a_k / b_k \geq a_l / b_l$; but then any Pareto improving exchange involving the transfer of items from $S_{A}$ and $S_{B}$ is only possible if at least one player gets a larger fraction of items without the other player getting a smaller fraction, which is impossible.
\end{proof}

By Lemma \ref{lemma:cond-iff-notPO}, a Pareto optimal allocation can be obtained by sorting the items by the ratios of the valuations and drawing a boundary line somewhere.
No matter where the boundary line is, the allocation is Pareto optimal (even if not equitable); thus an allocation is Pareto optimal and splits at most one item if and only if it is ordered. From this we obtain our first characterization.

\begin{comment}
%%%SIMINA:: I AM STATING THIS LEMMA INTEGRATED IN THE TEXT ABOVE BECAUSE OF LACK OF SPACE
\begin{lemma}\label{lemma}
An allocation is Pareto optimal and divides at most one item partially if and only if it is ordered.
\end{lemma}
\end{comment}

%We can now state our first characterization.
\begin{theorem}\label{th:charac1}
Adjusted Winner is the only Pareto optimal, equitable, and minimally fractional procedure. Any ordered equitable allocation can be produced by Adjusted Winner under some tie-breaking rule.
\end{theorem}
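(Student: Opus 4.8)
The plan is to unbundle the statement into three separate facts: that Adjusted Winner (henceforth AW) itself satisfies the three axioms, that every ordered equitable allocation is realizable by AW under some tie-breaking rule (the second sentence), and that the three axioms pin down the procedure up to that tie-breaking freedom. The workhorse throughout will be Lemma~\ref{lemma:cond-iff-notPO} and the remark following it, which already supply the equivalence ``Pareto optimal and minimally fractional $\iff$ ordered.'' It is convenient to prove the realizability sentence first, since it is exactly what drives the uniqueness argument.

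First I would check that AW meets the axioms. Minimal fractionality and equitability are immediate from Phase~2, which transfers fractions continuously and halts precisely when the two utilities coincide, so at most the single item in transit at that moment is split. Pareto optimality then comes for free: AW outputs an ordered allocation, and by Lemma~\ref{lemma:cond-iff-notPO} every ordered allocation is Pareto optimal.

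The core of the realizability sentence is a monotonicity observation. Fix the decreasing-ratio order of the items (ties broken in any fixed way) and sweep the boundary line from ``all to Bob'' to ``all to Alice''; since all valuations are strictly positive, $u_{\vec a}(W_A)$ is continuous and strictly increasing while $u_{\vec b}(W_B)$ is continuous and strictly decreasing, so the difference $u_{\vec a}(W_A)-u_{\vec b}(W_B)$ runs strictly monotonically from $-1$ to $+1$ and vanishes at a unique boundary position. Hence, for each fixed order, there is exactly one equitable ordered allocation. Now given an arbitrary ordered equitable allocation $W$, I would read off the order $\sigma$ it induces on the items (decreasing ratio, together with its particular choice among tied items) and run AW with $\sigma$ as its tie-breaking rule; by the monotonicity fact AW's boundary lands at the unique equitable position for $\sigma$, which is exactly $W$. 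For uniqueness of the procedure, let $P$ be any Pareto optimal, equitable, and minimally fractional procedure. On every input its output is Pareto optimal and minimally fractional, hence ordered by the remark after Lemma~\ref{lemma:cond-iff-notPO}, and equitable by hypothesis; so by the realizability claim it coincides with AW's output under a suitable tie-breaking rule. As this holds input by input, $P$ is an instance of AW.

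The subtle point I expect to be the main obstacle is the bookkeeping around ties. Uniqueness of the equitable allocation only holds once the order is fixed: with tied ratios there are genuinely several ordered equitable allocations that agree in utilities but differ in who receives the tied items, and the entire force of the second sentence is that these are precisely the AW outputs as the tie-breaking rule varies. I would therefore phrase the uniqueness as ``unique given the order'' and present the characterization as identifying AW as a \emph{family} parameterized by tie-breaking, rather than as a single allocation per input; this is also what lets the uniqueness argument for $P$ go through, since the tie-breaking rule matching $P$ may differ from input to input.
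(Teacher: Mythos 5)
Your proof is correct and takes essentially the same approach as the paper's: both reduce the theorem to the equivalence ``Pareto optimal and minimally fractional $\iff$ ordered'' (the remark following Lemma~\ref{lemma:cond-iff-notPO}) together with the claim that every ordered equitable allocation is an Adjusted Winner outcome under some tie-breaking rule. The only difference is that the paper simply asserts this realizability claim, whereas you prove it via the boundary-sweep monotonicity argument (a unique equitable boundary position for each fixed order), filling in the one step the paper leaves implicit.
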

\begin{comment}
%%%% SIMINA: ADD BACK IF ENOUGH SPACE %%%%
\begin{proof}
First notice that any ordered and equitable allocation is the outcome of Adjusted Winner with some tie-breaking rule. 
Now consider any equitable Pareto optimal allocation that has at most one item divided partially. By the previous observations, such an allocation must be ordered; since it is equitable it must then be the outcome of Adjusted Winner 
with some tie-breaking rule.
\end{proof}
\end{comment}

%%\soren{should argue that AW can produce any ordered equitable allocation.}

Note that both Pareto optimality and equitability are necessary for the characterization. By restricting to Pareto optimal allocations only, then even the allocation that gives all the items to one player is Pareto optimal, while by restricting to equitable allocations only, even an allocation that throws away all the items is equitable. Similarly when the players have identical utilities for some items, then there exist Pareto optimal and equitable allocations that split more than one item. For example, if the two players have identical utilities over all items, then the allocation that gives half of each item to each player is equitable and Pareto optimal. However, in the case that the valuation are such that $a_i/b_i\neq a_j/b_j$ for all items $i \neq j$, 
then Adjusted Winner is exactly characterized by Pareto optimality and equitability.

We say that an allocation is \emph{maxmin} if it maximizes the minimum utility over both players.

\begin{theorem}\label{th:characb}
If the valuations satisfy $a_i/b_i\neq a_j/b_j$ for all items $i \neq j$, then the only Pareto optimal and equitable allocation is the result of Adjusted Winner.
\end{theorem}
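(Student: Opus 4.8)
The plan is to show that the distinctness hypothesis $a_i/b_i \ne a_j/b_j$ upgrades Pareto optimality so that it already forces \emph{minimal fractionality}, after which the result reduces to the characterization in Theorem~\ref{th:charac1}. The engine is Lemma~\ref{lemma:cond-iff-notPO}: an allocation $W$ fails to be Pareto optimal exactly when there is a pair $(i,j)$ with Alice holding a positive fraction of $j$, Bob holding a positive fraction of $i$, and $a_i b_j > a_j b_i$. Taking the contrapositive, $W$ is Pareto optimal if and only if for every item $j$ that Alice holds (even partially) and every item $i$ that Bob holds (even partially), we have $a_i/b_i \le a_j/b_j$.

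First I would establish the key structural claim: under the distinctness hypothesis, any Pareto optimal allocation splits at most one item. Suppose toward a contradiction that two distinct items $k$ and $l$ are each split, so that both players hold a positive fraction of each. Applying the Pareto-optimality condition with Alice's item $j=k$ and Bob's item $i=l$ gives $a_l/b_l \le a_k/b_k$; applying it with $j=l$ and $i=k$ gives $a_k/b_k \le a_l/b_l$. Together these force $a_k/b_k = a_l/b_l$, contradicting the assumption that all ratios are distinct. Hence at most one item is fractionally divided.

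With this in hand, the given allocation $W$ is simultaneously Pareto optimal and minimally fractional, so by the observation following Lemma~\ref{lemma:cond-iff-notPO} it must be \emph{ordered}: sorting items by decreasing ratio $a_i/b_i$ (an unambiguous total order, since the ratios are distinct), $W$ assigns to Alice an initial segment and to Bob the complementary segment, with a single possible split at the boundary. It then remains to show that equitability pins down the boundary uniquely. As the boundary line is swept from ``Alice receives everything'' to ``Bob receives everything,'' Alice's utility $u_{\vec a}(W_A)$ decreases continuously and strictly (all values are positive) while Bob's utility $u_{\vec b}(W_B)$ increases continuously and strictly; hence the equitability equation $u_{\vec a}(W_A)=u_{\vec b}(W_B)$ holds at exactly one boundary position. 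That unique ordered equitable allocation is precisely the outcome of Adjusted Winner, as recorded in Theorem~\ref{th:charac1}.

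The argument is short, and the only real obstacle is the structural claim, where one must be careful to apply Lemma~\ref{lemma:cond-iff-notPO} in both directions to the pair of split items; everything else follows from monotonicity and the earlier characterization. A minor point worth checking is that a split item legitimately counts as held by both Alice and Bob simultaneously, which is exactly what licenses the two symmetric applications of the Pareto condition.
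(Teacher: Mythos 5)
Your proof is correct, and it rests on the same engine as the paper's: both arguments ultimately apply Lemma~\ref{lemma:cond-iff-notPO} to a pair of items held by both players, with distinctness of the ratios turning the two resulting weak inequalities into a contradiction. The architecture differs, however, and yours is the more careful of the two. The paper argues by contradiction: a Pareto optimal, equitable allocation that is not the AW outcome must fail to be ordered, and it then asserts that non-orderedness produces two items of which both players hold fractions. That step is loose as stated, since an allocation can fail to be ordered without splitting any item at all (an integral but misordered allocation); that case also contradicts Lemma~\ref{lemma:cond-iff-notPO}, but via a single application, and the paper does not separate it out. Your decomposition sidesteps this conflation entirely: you first show that distinctness forces any Pareto optimal allocation to be minimally fractional (the symmetric double application of the lemma to two split items), then invoke the paper's own observation that Pareto optimality plus minimal fractionality is equivalent to being ordered, and finally pin down the boundary by strict monotonicity of the two utilities as the line sweeps across, which makes explicit the uniqueness claim that the paper leaves implicit in its appeal to Theorem~\ref{th:charac1}. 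What your route buys is a clean reduction to the earlier characterization plus a rigorous uniqueness argument; what the paper's route buys is brevity, at the cost of the gap just described (and an opening remark about maxmin allocations that plays no role in its own argument).
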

\begin{proof}
Recall first that an allocation is maxmin if it maximizes the minimum utility of the players. Notice that AW achieves the same level of utility for the players.
Now
assume there exists an allocation $(\alpha,\beta)$ that is Pareto optimal and equitable, but not a result of AW. Then the allocation is not ordered and there exist at least two items $i$ and $j$ such that both players get a fraction of them. This contradicts Lemma \ref{lemma:cond-iff-notPO}, and so $(\alpha,\beta)$ does not exist.
\end{proof}

From Lemma 3.3~\cite{DaMo07a}, an allocation is maxmin if and only if it is Pareto optimal and equitable.
Together with Theorem~\ref{th:charac1}, this leads to another characterization.

\begin{theorem}
Adjusted Winner is equivalent to the procedure that always outputs a maxmin and minimally fractional allocation.
\end{theorem}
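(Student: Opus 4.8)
The plan is to derive the statement as an immediate consequence of the characterization in Theorem~\ref{th:charac1} together with the cited Lemma 3.3 of~\cite{DaMo07a}. The key observation is that both results describe the \emph{same} family of allocations, only in different language: Theorem~\ref{th:charac1} tells us that an allocation is producible by Adjusted Winner (under some tie-breaking rule) if and only if it is Pareto optimal, equitable, and minimally fractional, while Lemma 3.3 of~\cite{DaMo07a} states that an allocation is maxmin if and only if it is Pareto optimal and equitable. Substituting the second equivalence into the first, I would replace the clause ``Pareto optimal and equitable'' by ``maxmin,'' concluding that an allocation is producible by Adjusted Winner if and only if it is maxmin and minimally fractional.

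Concretely, I would fix an arbitrary valuation profile $(\vec{a},\vec{b})$ and verify the two directions of the set equality. For the forward direction, take any allocation $W$ output by Adjusted Winner; by the ordered-allocation interpretation established earlier it is Pareto optimal (via Lemma~\ref{lemma:cond-iff-notPO}) and minimally fractional, and its boundary line is placed to enforce equitability, so Lemma 3.3 of~\cite{DaMo07a} upgrades ``Pareto optimal and equitable'' to ``maxmin,'' giving that $W$ is maxmin and minimally fractional. For the reverse direction, take any maxmin and minimally fractional allocation $W$; Lemma 3.3 of~\cite{DaMo07a} yields that $W$ is Pareto optimal and equitable, and together with minimal fractionality, Theorem~\ref{th:charac1} guarantees that $W$ is produced by Adjusted Winner under some tie-breaking rule. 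Since both containments hold on every profile, the two procedures are equivalent.

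The only real subtlety, and thus the main thing to handle carefully rather than a genuine obstacle, is the role of tie-breaking: Adjusted Winner is not a single map but a family indexed by tie-breaking rules, and likewise the maxmin-and-minimally-fractional requirement may be met by several allocations when some valuation ratios coincide. I would therefore phrase the equivalence at the level of the \emph{set} of allocations each procedure can produce, exactly as Theorem~\ref{th:charac1} does, so that every output of one procedure is matched by a suitable choice of tie-breaking in the other. No new computation is needed beyond chaining the two already-established equivalences.
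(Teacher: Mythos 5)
Your proposal is correct and matches the paper's own argument: the paper derives this theorem exactly by combining Theorem~\ref{th:charac1} with Lemma 3.3 of~\cite{DaMo07a} (maxmin $\iff$ Pareto optimal and equitable), which is precisely the substitution you chain together. Your added care about tie-breaking and the set-level phrasing of the equivalence is a faithful elaboration of the same proof, not a different route.
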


%Next, we examine the existence of pure Nash equilibria of Adjusted Winner.

%Having investigated the characterizations of the protocol, we now proceed to analyze the outcomes attainable when the players 
%are strategic and may use strategies different from their true valuations. We will return to the characterization given by Theorem (\simina{State NUMBER???})
%when analyzing the efficiency of its equilibria.

\begin{comment}
%%% SIMINA:: LEAVE THIS FOR A FULL VERSION OF THE PAPER; IT WOULD BE USEFUL PERHAPS TO INCLUDE AN EXAMPLE WITH A MANIPULATION?
if players are allowed to bid arbitrary units of points, then each player has an incentive to bid lower total number of points. Consider the allocation when both players bid the same number of total points. Now if Alice bids a lower number of total points, then the original allocation does not give equal utility to the players according to the new bids. In order to the equalize the utility achieved by both the players, the boundary of the original allocation is moved in favour of Alice. Hence Alice gets a more preferred allocation by simply scaling down his bids.
\end{comment}

\section{Equilibrium Existence}\label{sec:strategic}

In this section, we study Adjusted Winner when the players are \emph{strategic}, that is, their reported valuations are not necessarily the same as their actual valuations. Let $\mathbf{x}=(x_1,x_2,\ldots,x_m)$ and $\mathbf{y}=(y_1,y_2,\ldots,x_m)$ be the \emph{strategies} (i.e. declared valuations) of Alice and Bob respectively. Call $(\mathbf{x,y})$ a \emph{strategy profile}. We will refer to $\mathbf{a}$ and $\mathbf{b}$ as the \emph{true values} of Alice and Bob. Note that since strategies are reported valuations they are positive numbers that sum to $1$.

Since the input to Adjusted Winner is now a strategy profile $(\mathbf{x,y})$ instead of $(\mathbf{a,b})$, this means that the properties of the procedure are only guaranteed to hold with respect to the \emph{declared} valuations, and not necessarily the true ones\footnote{We will show that in the equilibrium, the procedure guarantees some of the properties with respect to the true values as well.}.
%
%all the results from Section \ref{sec:characterizations} can be rephrased in terms of the strategies.

A strategy profile $\mathbf{(x,y)}$ is an $\epsilon$-\emph{Nash equilibrium} if no player can increase its utility by more than $\epsilon$ by deviating to a different (pure) strategy. 
For $\epsilon=0$, we obtain a \emph{pure Nash equilibrium}. 

%The main result of this section is that $\epsilon$-Nash equilibria always exist. Furthermore, using an appropriate rule for settling ties between items with equal ratios $x_i/y_i$, the procedure also has exact pure Nash equilibria. We start our investigations from simple tie-breaking rules. 

The main result of this section is that Adjusted Winner is only guaranteed to have $\epsilon$-Nash equilibria when $\epsilon>0$ using standard tie-breaking. For the discrete case, this is achieved by the center setting the number of points or equivalently the denominator large enough. Furthermore, we prove that when using an appropriate rule for settling ties between items with equal ratios $x_i/y_i$, the procedure does admit exact pure Nash equilibria. We start our investigations from the standard tie-breaking rules. 

\subsection{Lexicographic Tie-Breaking}
%
% \haris{Have we mentioned somewhere that using less points helps a player?}

%%%SIMINA:: Yep, these are going in the intro to the respective sections:: \haris{Prominently define `uninformed' by italicizing it the first time it is defined}
%%%\haris{Spell out what the discrete and continious variants actually are}

The classical formulation of Adjusted Winner resolves ties in an arbitrary deterministic way, for example by ordering the items lexicographically, such that items with lower indices come first. 

\subsubsection{Continuous Strategies}
First, we consider the case of continuous strategies. We start with the following theorem.

\begin{theorem}\label{cont:no-PNE}
Adjusted Winner with continuous strategies is not guaranteed to have pure Nash equilibria.
\end{theorem}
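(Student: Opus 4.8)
The plan is to exhibit a specific valuation profile $(\vec{a}, \vec{b})$ for which no strategy profile $(\vec{x}, \vec{y})$ can be a pure Nash equilibrium. Since the claim is a non-existence result, a single counterexample suffices, and the natural candidate is the smallest nontrivial instance: two items. I would first set up the two-item case and use the ordered-allocation interpretation from Section~\ref{sec:characterizations} to reduce the strategy space to something tractable. With two items and additive, normalized valuations ($a_1 + a_2 = 1$, etc.), each player's declared valuation is effectively a single number (say $x_1$, with $x_2 = 1 - x_1$), so the game becomes a two-parameter game and the outcome $AW(\vec{x}, \vec{y})$ depends only on how the two ratios $x_1/y_1$ and $x_2/y_2$ compare, together with where equitability forces the boundary line.

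**The key steps**, in order: (i) choose concrete true valuations $\vec{a}, \vec{b}$ (asymmetric enough that the two items are genuinely contested, e.g.\ Alice values item~$1$ more and Bob values item~$2$ more, or some skewed variant); (ii) for an arbitrary candidate profile $(\vec{x}, \vec{y})$, compute each player's \emph{true} utility $u_{\vec a}(AW_A(\vec x, \vec y))$ and $u_{\vec b}(AW_B(\vec x, \vec y))$ as an explicit function of the declared parameters, using the Phase~1 / Phase~2 description and the equitability condition to locate the split point; (iii) analyze each player's best response. The crux is to show that the best-response correspondence has no fixed point: intuitively, each player benefits from declaring a valuation that \emph{exaggerates} the relative importance of the item it is losing a fraction of, shifting the boundary line in its favor, and this incentive is strict and never stabilizes. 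I would argue that for every profile either Alice or Bob has a profitable deviation, typically by a continuity/limiting argument showing the supremum of a player's attainable utility is approached but never attained (which is also consistent with the companion $\epsilon$-Nash existence result in the next theorem).

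**The main obstacle** I anticipate is the tie-breaking discontinuity: when $x_1/y_1 = x_2/y_2$ the ordering is broken lexicographically, and a player may try to exploit an open-versus-closed inequality to capture a discrete jump in the split. The non-existence of equilibrium should hinge precisely on this discontinuity — a player wants to drive the ratio to a boundary value to seize a better allocation, but at the boundary the lexicographic rule flips the outcome unfavorably, so the optimum is an unattained supremum. Making this rigorous requires a careful case split on which item is fractionally divided and on the direction of the tie-break, and verifying that \emph{no} placement of $(\vec x, \vec y)$ escapes a profitable deviation. I would organize the argument to cover both the ``boundary line splits item~$1$'' and ``splits item~$2$'' regimes, plus the integral-allocation case, and show a strict improving move exists in each; the delicate part is ensuring the deviation stays within the simplex of valid (positive, normalized) declarations.
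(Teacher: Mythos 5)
Your high-level plan does match the paper's proof: a two-item instance, reduction to a two-parameter game $(\vec{x},\vec{y})=((x,1-x),(y,1-y))$, an exhaustive case split over profiles, and the observation that the lexicographic tie-break creates a discontinuity so that a deviating player's optimum can be an unattained supremum. The genuine gap is in your step (i): the one concrete instance you propose -- Alice values item~1 more and Bob values item~2 more, i.e.\ $a_1>1/2>b_1$ -- actually \emph{does} admit an exact pure Nash equilibrium under lexicographic tie-breaking, so for that instance the claim ``the best-response correspondence has no fixed point'' is false and no case analysis can rescue it. Concretely, consider the profile $(\vec{x},\vec{y})=(\vec{b},\vec{b})$, where both players report Bob's true valuation. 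All ratios equal $1$, the tie-break puts item~1 on Alice's side, and equitability with respect to the reports gives Alice all of item~1 plus a fraction $h=(1/2-b_1)/b_2$ of item~2, while Bob gets the rest of item~2. Bob's true utility is $b_2(1-h)=b_1+b_2-1/2=1/2$ exactly, and every deviation gives him strictly less: bidding $y'>b_1$ flips the order so item~2 goes to Alice's side, yielding Bob $b_1+b_2\frac{b_2-y'}{1+b_2-y'}$, which is strictly below its limit $1/2$ as $y'\to b_1^{+}$; bidding $y'<b_1$ yields $b_2/(1+b_2-y')<b_2/(2b_2)=1/2$. Symmetrically, Alice's deviations $x'>b_1$ yield $a_1+a_2\frac{b_2-x'}{1+b_2-x'}$, whose supremum as $x'\to b_1^{+}$ equals exactly her utility at the tie, and any $x'<b_1$ loses item~1 and is far worse. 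So the discontinuity you are counting on to destroy equilibria works here in the opposite direction: the lexicographic rule hands item~1 to precisely the player who wants it, and $(\vec{b},\vec{b})$ is stable.

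The paper's instance is qualitatively different: both players prefer the \emph{same} item but with different intensities, $b_1>a_1>a_2>b_2>0$ (so $b_1>a_1>1/2$). Non-tie profiles are easy to kill (the higher bidder on item~1 shades toward the other's bid, moving the boundary in her favor -- note this is bid-shading on the item being won, not exaggeration of the item being lost, as your intuition suggests). At a tie $x=y=t>1/2$, one computes that Alice has no profitable downward deviation only if $t\le a_1$, while Bob has no profitable upward deviation only if $t\ge b_1$; since $b_1>a_1$ these are incompatible, and the cases $t\le 1/2$ are dispatched directly. If you insist on an opposite-preferences instance, it must be oriented \emph{against} the tie-break: with ties giving item~1 to Alice's side, the instance $a_1<1/2<b_1$ (Alice prefers item~2, Bob prefers item~1) has no pure equilibrium, because the tie-break then awards item~1 to the player who does not want it and the $\epsilon$-undercutting never stabilizes. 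The moral is that the interaction between the instance and the tie-breaking rule is the heart of this proof, not a detail to be filled in during verification.
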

\begin{proof}
Take an instance with two items and valuations $(\vec{a}, \vec{b})$, where $b_1 >  a_1 > a_2 > b_2 > 0$.
Assume by contradiction there is a pure Nash equilibrium at strategies $(\vec{x}, \vec{y})$, where $\vec{x} = (x, 1 - x)$ and $\vec{y} = (y, 1-y)$.
%We have a few cases.
We study a few cases and show the players can always improve. \\

\textbf{\emph{Case 1}}: ($x \neq y$). Without loss of generality $x>y$ (the case $x<y$ is similar). Then there exists $ \delta \in \mathbb{R}$ with $x-\delta > y \Rightarrow 1-x +\delta < 1 - y$, and Alice can improve by playing $\mathbf{x'}=(x-\delta, 1-x+\delta)$, as the boundary line moves to the left of its former position. \\

\textbf{\emph{Case 2}}: ($x=y<1/2)$. Here both players report higher values on the item they like less; Alice's allocation is $(1,\lambda)$ while Bob's is $(0,1-\lambda$), for some $\lambda \in (0,1)$. Then $\exists \; \delta \in \mathbb{R}$ with $x+\delta < 1/2$. By playing $\mathbf{y'}=(x+\delta, 1-x-\delta)$, Bob gets $(1,1-\lambda')$, for some $\lambda' \in (0,1)$. This is a strict improvement since $a_1 > a_2$. \\

%Consider such a $\delta$ and strategy $\mathbf{y'}=(x+\delta, 1-x-\delta)$ for Bob. Under $\mathbf{y'}$, Bob's allocation is $(1,1-\lambda')$ for some %$\lambda' \in (0,1)$, which is a strict improvement since $a_1 > a_2$, contradiction.

\textbf{\emph{Case 3}}: ($x=y>1/2$). Both players report higher values on the item they like more. Bob gets $(1-\frac{1}{2x},1)$ and Alice gets $(\frac{1}{2x},0)$, with utilities:
$$u_{\vec{a}}(AW(\vec{x},\vec{y})) = \frac{a_1}{2x}$$ 
and 
$$u_{\vec{b}}(AW(\vec{x},\vec{y}))  = \left(1-\frac{1}{2x}\right)b_1+b_2.$$ 

Let $\delta \in (0,\min(1-x,2x-1))$ such that:
$$\delta < \max\left\{\frac{4x(x-a_1)}{2x-a_1},\frac{4x(b_1-x)}{2x-b_1}\right\}.$$ Observe that since $b_1 > a_1$ and $2x-a_1$ and $2x-b_1$ are positive, at least one of $x-a_1$ and $b_1-x$ is strictly positive and by continuity of the strategy space, such a $\delta$ exists. 
%
% means that Alice's utility is $\frac{1}{2x}a_1$ and Bob's utility is $\left(1-\frac{1}{2x}\right)b_1+b_2$. 
Now consider alternative profiles $\mathbf{(x',y)} = ((x-\delta, 1-x+\delta), (x,1-x))$ and $\mathbf{(x,y')}=((x,1-x),(x+\delta,1-x-\delta))$. Since $\delta < 2x-1$, the first item is still the item that gets split in the new profile.
Using the identities $a_1+a_2=b_1+b_2=1$ and the assumption that $\mathbf{(x,y)}$ is a pure Nash equilibrium, we have that
\begin{equation*} 
\left\{\begin{array}{l}
  a_1\left(1-\frac{1}{2x}-\frac{1}{2x-\delta}\right) + a_2 \leq 0 \implies \delta \geq \frac{4x(x-a_1)}{2x-a_1}\\
  b_1 \left(1-\frac{1}{2x}-\frac{1}{2x+\delta}\right) + b_2 \geq 0 \implies \delta \geq  \frac{4x(b_1-x)}{2x-b_1}
  \end{array}
\right.
\end{equation*}

We obtain a contradiction, so this case cannot occur. \\
 
\textbf{\emph{Case 4}}: ($x=y=1/2$). Alice and Bob get allocations $(1,0)$ and $(0,1)$, respectively.
%In this case Bob's allocation is $(0,1)$. 
Let $0 < \delta < \frac{(b_1-b_2)}{b_2}$ and consider the strategy $\mathbf{y'}=(x+\delta, 1-x-\delta)$ of Bob. Using $\vec{y}'$, Bob gets the allocation $(\frac{1}{\delta+1},0)$, which is better than $(0,1)$. Since $b_1>b_2$, such $\delta$ exists. \\

As none of the cases $(1)-(4)$ are stable, the procedure has no pure Nash equilibrium.
\end{proof}

However, we show that Adjusted Winner admits approximate Nash equilibria.

%\textbf{NOTE:} Without loss of generality, in the following (and probably also previously) we can assume that each player has a budget of $1$ dollar that can be split continuously.

\begin{theorem} \label{thm:epsilonNash}
Each instance of Adjusted Winner with continuous strategies has an $\epsilon$-Nash equilibrium, for every $\epsilon > 0$.
\end{theorem}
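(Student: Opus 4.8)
The plan is to run a compactness argument on the declaration space, using that the only obstruction to an exact equilibrium---already visible in the proof of Theorem~\ref{cont:no-PNE}---is that a profitable deviation requires reaching the \emph{open} side of a discontinuity of the payoff. First I would compactify the strategy space: replace the open simplex of admissible declarations by its closure $\Delta=\{\vec{x}\in[0,1]^m:\sum_i x_i=1\}$, and record that each player's true utility, viewed as the functions $(\vec{x},\vec{y})\mapsto u_{\vec{a}}(AW_A(\vec{x},\vec{y}))$ and $(\vec{x},\vec{y})\mapsto u_{\vec{b}}(AW_B(\vec{x},\vec{y}))$, is bounded in $[0,1]$ and continuous off a closed, nowhere-dense set of profiles---namely those exhibiting a tie $x_i/y_i=x_j/y_j$ or an equalizing boundary falling exactly on an item endpoint. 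The decisive structural fact I would isolate is that these discontinuities are \emph{one-sided and approachable}: for every $\vec{y}$ the supremum $V_A(\vec{y}):=\sup_{\vec{x}}u_{\vec{a}}(AW_A(\vec{x},\vec{y}))$ is approached arbitrarily closely by interior declarations even when it is not attained, and symmetrically for $V_B(\vec{x})$. This is exactly the slack that an $\epsilon$-equilibrium can absorb.

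With this in hand, the theorem reduces to producing a single profile at which both players are simultaneous $\epsilon$-best responses, i.e. a profile with regret $\rho(\vec{x},\vec{y}):=\max\{V_A(\vec{y})-u_{\vec{a}}(AW_A(\vec{x},\vec{y})),\,V_B(\vec{x})-u_{\vec{b}}(AW_B(\vec{x},\vec{y}))\}\le\epsilon$; equivalently it suffices to show $\inf_{(\vec{x},\vec{y})}\rho=0$. To obtain such a profile I would smooth the payoffs across the discontinuity set on a strip of width $\delta$, yielding for each $\delta>0$ a game $G_\delta$ with continuous bounded payoffs on the compact convex set $\Delta\times\Delta$, and extract a profile $(\vec{x}^\delta,\vec{y}^\delta)$ that approximately minimizes the (continuous) regret of $G_\delta$. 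Letting $\delta\to 0$ and passing to a convergent subsequence by compactness yields a limit profile $(\vec{x}^\ast,\vec{y}^\ast)$, and the final step is to verify that its regret in the true game is at most $\epsilon$.

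I expect the main obstacle to be precisely this last step: controlling the value functions $V_A,V_B$ across the limiting discontinuity, since an opponent's best deviation could in principle jump \emph{upward} exactly at $(\vec{x}^\ast,\vec{y}^\ast)$. This is where one-sidedness must be used: because each player's optimal deviation is attained (in the limit) only on the open favorable side of a tie or endpoint, I would nudge each declaration by a displacement of size $o(\epsilon)$ onto that favorable side of every currently active tie, recovering all but $\epsilon$ of the supremum value while altering the realized allocation only negligibly. Making this quantitative---that the loss from staying on the favorable side is continuous in the nudge and hence uniformly controllable below $\epsilon$---is the technical heart of the argument. An alternative route, should the smoothing step prove delicate, is to first establish existence of an exact pure equilibrium under informed tie-breaking and then observe that the same profile, perturbed by $o(\epsilon)$ to break each relevant tie in the favorable direction, is an $\epsilon$-Nash equilibrium under lexicographic tie-breaking; both approaches rely on the same mechanism, namely that the failure of exact equilibria is confined to the zero-measure discontinuity set.
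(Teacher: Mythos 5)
Your central step fails: for the smoothed games $G_\delta$ you never establish, and cannot get for free, that the minimum regret tends to zero. Continuity of the payoffs on the compact convex set $\Delta\times\Delta$ yields existence of \emph{mixed} equilibria (Glicksberg), but existence of pure (or low-regret pure) profiles additionally requires quasi-concavity of each player's payoff in her own declaration (Debreu--Fan--Glicksberg), and the Adjusted Winner payoffs do not have this property: as a function of one player's declaration alone, the payoff has separated ``good regions'' corresponding to the different induced orderings of the ratios, and smoothing across the tie set does not make it quasi-concave. So $\min_{(\vec{x},\vec{y})}\rho_{G_\delta}$ may be bounded away from $0$ for every $\delta$, and the compactness/limit machinery then has nothing to work with; since showing $\inf\rho=0$ is exactly the content of the theorem, the plan is circular at its decisive point. (Your own worry about upward jumps of $V_A,V_B$ at the limit profile is also real, but it is secondary to this.)

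Your fallback route is essentially the paper's actual proof, but it is missing the key quantitative ingredient. The paper takes the profile $(\vec{a},\tilde{\vec{a}})$, where Bob plays a small perturbation of Alice's true valuation: Lemma~\ref{lem:Bobepsilon} shows Bob can choose the perturbation so as to induce his favourite ordering and split point, putting him within $\epsilon$ of his global supremum against $\vec{a}$; and Lemma~\ref{lem:Aliceepsilon} shows Alice cannot gain more than $\epsilon$ by \emph{any} deviation, because envy-freeness of the procedure with respect to Bob's declared valuation $\tilde{\vec{a}}$ forces $u_{\tilde{\vec{a}}}(AW_B(\vec{x}',\tilde{\vec{a}}))\ge 1/2$ for every deviation $\vec{x}'$, hence $u_{\vec{a}}(AW_A(\vec{x}',\tilde{\vec{a}}))\le 1/2+\epsilon$ by the $\epsilon$-closeness of $\vec{a}$ and $\tilde{\vec{a}}$. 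This envy-freeness cap is what rules out \emph{global} deviations; your ``nudge onto the favorable side'' argument only controls what happens locally near the broken ties and says nothing about a deviation that jumps to an entirely different ordering. If you discard the smoothing scaffolding and instead flesh out the fallback with this envy-freeness bound, you recover a complete proof along the paper's lines.
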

\begin{proof}
Let $(\vec{a}, \vec{b})$ be any instance. We show there exists an $\epsilon$-Nash equilibrium in which
Alice plays her true 
valuations and Bob plays a small perturbation of Alice's valuations.
More formally, we show there exist $\epsilon_1, \ldots, \epsilon_m$, 
such that an $\epsilon$-equilibrium is obtained when Alice plays $\vec{a} = (a_1, \ldots, a_m)$ and Bob plays
$\tilde{\vec{a}} = (\tilde{a}_1, \ldots, \tilde{a}_m)$, where $\tilde{a}_i = a_i + \epsilon_i$ for each item $i \in [m]$ and $\sum_{i=1}^{m} \epsilon_i = 0$. The theorem will follow from the next two lemmas.
\end{proof}

\begin{comment}
Recall that for any strategies $(\vec{x}, \vec{y})$ of Alice and Bob, respectively, the outcome of the procedure is computed by
\begin{itemize} 
\item sorting the items 
in decreasing order of the $\frac{x}{y}$ ratios: $\frac{x_{\pi_1}}{y_{\pi_1}} \geq \ldots \geq \frac{x_{\pi_m}}{y_{\pi_m}}$, 
with ties broken lexicographically and $\pi = (\pi_1, \ldots, \pi_m)$ 
a permutation of $\{1, \ldots, m\}$
\item finding the uniquely defined index $l$ and $\lambda$ such that: 
$x_{\pi_1} + \ldots x_{\pi_{l-1}} + \lambda x_{\pi_l} = (1 - \lambda) y_{\pi_l} + y_{\pi_{l+1}} + \ldots + y_{\pi_m}$.
\end{itemize}
\end{comment}

%Let $V_a$ and $V_b$ denote the valuation functions induced by preference vectors $(\vec{a}, \vec{b})$,
%and
%$X = (X_a, X_b)$ the resulting allocation given preference vectors $(\vec{a}, \vec{b})$.

\begin{lemma} \label{lem:Aliceepsilon}
For 
any pair of strategies $(\vec{a}, \tilde{\vec{a}})$, where $|a_i - \tilde{a}_i| < \epsilon/m$ for all $i \in [m]$,
Alice's strategy is an $\epsilon$-best response.
\end{lemma}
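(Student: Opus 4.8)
The plan is to show that when Alice plays her true valuation $\vec{a}$ and Bob plays a nearby perturbation $\tilde{\vec{a}}$ with $|a_i - \tilde a_i| < \epsilon/m$, Alice cannot gain more than $\epsilon$ by deviating. The key structural observation is that, since Bob's declared values are very close to Alice's, the ratios $a_i/\tilde a_i$ are all close to $1$, so in Phase~1 essentially every item is contested and the outcome is determined almost entirely by the equitability condition in Phase~2. First I would fix Bob's strategy $\tilde{\vec a}$ and consider an arbitrary deviation $\vec{x}$ by Alice, producing the ordered allocation $AW(\vec{x}, \tilde{\vec a})$. I want to bound Alice's utility under $\vec{x}$ against her utility under the honest profile $(\vec a, \tilde{\vec a})$.

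The central idea I would exploit is equitability of the Adjusted Winner outcome \emph{with respect to the declared valuations}: on any profile $(\vec{x}, \tilde{\vec a})$, the procedure equalizes $u_{\vec x}(AW_A) = u_{\tilde{\vec a}}(AW_B)$. Since Alice's declared and Bob's declared valuations are within $\epsilon/m$ coordinatewise of $\vec a$, and each bundle is a vector in $[0,1]^m$ with at most $m$ items, the quantity $u_{\vec x}(AW_A(\vec x, \tilde{\vec a}))$ computed with the \emph{declared} values $\vec x$ differs from Alice's \emph{true} utility $u_{\vec a}(AW_A(\vec x, \tilde{\vec a}))$ by at most $\sum_i |x_i - a_i| \cdot w_A^i$. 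The plan is to relate everything back to the true valuation $\vec a$. Concretely, I would argue: Alice's true utility from any deviation is at most her declared utility plus a small error, her declared utility equals Bob's declared utility by equitability, Bob's declared utility is at most his declared utility of \emph{all} items which is $1$, and more usefully the two bundles partition the items so $u_{\tilde{\vec a}}(AW_A) + u_{\tilde{\vec a}}(AW_B) = 1$. Under the honest profile, equitability gives each player declared utility near $1/2$, and Alice's true utility there is also near $1/2$ up to the $\epsilon/m$ perturbation summed over at most $m$ items, i.e.\ up to $\epsilon$.

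The main obstacle, and the step I expect to require the most care, is bounding the best possible \emph{true} utility Alice can extract from \emph{any} deviation $\vec x$, not merely from small ones. A priori Alice could report a wildly different $\vec x$ and shift the boundary line dramatically in her favor according to the declared valuations. The resolution is that equitability is with respect to declared values: no matter what $\vec x$ Alice declares, she is guaranteed declared utility exactly equal to Bob's, and since Bob's declared values $\tilde{\vec a}$ differ from Alice's true values $\vec a$ by at most $\epsilon/m$ per item, any bundle Alice could possibly receive is worth to her (under $\vec a$) at most its value under $\tilde{\vec a}$ plus $\epsilon$; and its value under $\tilde{\vec a}$ to Alice's bundle is bounded using the partition identity and equitability. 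I would therefore show $u_{\vec a}(AW_A(\vec x, \tilde{\vec a})) \le u_{\tilde{\vec a}}(AW_A(\vec x, \tilde{\vec a})) + \epsilon$, and that the right-hand declared quantity is maximized (over bundles) essentially at the equitable split, giving the honest outcome up to additive $\epsilon$. Combining these yields that no deviation improves Alice's true utility by more than $\epsilon$, establishing the $\epsilon$-best-response claim. I would close by noting the symmetric bound for Bob is the content of the companion lemma, so that together they deliver the $\epsilon$-Nash equilibrium asserted in Theorem~\ref{thm:epsilonNash}.
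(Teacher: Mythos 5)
Your overall architecture matches the paper's: fix Bob at $\tilde{\vec{a}}$, show that \emph{no} deviation $\vec{x}$ can give Alice true utility above $1/2+\epsilon$, and show that truthful play already secures about $1/2$. Your pivot inequality $u_{\vec{a}}(AW_A(\vec{x},\tilde{\vec{a}})) \le u_{\tilde{\vec{a}}}(AW_A(\vec{x},\tilde{\vec{a}})) + \epsilon$ is exactly the right move, since it compares Alice's true valuation to Bob's \emph{declared} one and is therefore uniform over all deviations, wild or not.

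However, there is a genuine gap at the crucial step: you claim that $u_{\tilde{\vec{a}}}(AW_A(\vec{x},\tilde{\vec{a}}))$ is bounded by roughly $1/2$ ``using the partition identity and equitability.'' Those two facts alone cannot deliver this. Equitability gives $u_{\vec{x}}(AW_A)=u_{\tilde{\vec{a}}}(AW_B)$ and the partition identity gives $u_{\tilde{\vec{a}}}(AW_A)=1-u_{\tilde{\vec{a}}}(AW_B)=1-u_{\vec{x}}(AW_A)$; to conclude $u_{\tilde{\vec{a}}}(AW_A)\le 1/2$ you still need $u_{\vec{x}}(AW_A)\ge 1/2$, i.e., proportionality with respect to Alice's \emph{declared} valuation $\vec{x}$, and that does not follow from equitability plus completeness: with $\vec{x}=(0.9,0.1)$ and $\tilde{\vec{a}}=(0.1,0.9)$, giving item $2$ to Alice and item $1$ to Bob is a complete, equitable allocation at common level $0.1$. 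The missing ingredient is precisely the envy-freeness of Adjusted Winner with respect to the declared valuations, which is what the paper's proof invokes directly: for \emph{every} strategy of Alice, Bob's declared utility of his own bundle is at least $1/2$, hence $u_{\tilde{\vec{a}}}(AW_A(\vec{x},\tilde{\vec{a}}))\le 1/2$ and so $u_{\vec{a}}(AW_A(\vec{x},\tilde{\vec{a}}))\le 1/2+\epsilon$; and envy-freeness with respect to Alice's true (equal to declared) valuation gives the exact lower bound $u_{\vec{a}}(AW_A(\vec{a},\tilde{\vec{a}}))\ge 1/2$ at the truthful profile. Note also that your equitability-based lower bound at the truthful profile only yields $1/2-\epsilon/2$, which degrades the conclusion to a $3\epsilon/2$-best response; using envy-freeness on Alice's side restores the clean $\epsilon$. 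Once you replace ``equitability + partition'' with the procedure's envy-freeness (a stated property you are free to use), your argument coincides with the paper's proof.
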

\begin{proof}
Since the procedure is envy-free, Alice gets at least half of the total value by being truthful regardless of Bob's strategy, and so $u_{\vec{a}}(AW_A(\vec{a},\vec{\tilde{a}})) \geq 1/2$.
%%%%\mathcal{W}_{\vec{a},\vec{\tilde{a}}}^{A})) \geq 1/2$.
The allocation must also be envy-free according to Bob's declared valuation profile $\tilde{\vec{a}}$, 
and so $u_{\tilde{\vec{a}}}(AW_B(\vec{a}, \tilde{\vec{a}})) \geq 1/2$.

Since strategies $\vec{a}$ and $\tilde{\vec{a}}$ are $\epsilon$-close, that is $\sum_i |a_i - \tilde{a}_i| < \epsilon$, then their evaluations of the same allocation, namely $AW_B(\vec{a}, \tilde{\vec{a}})$, are also close:
\begin{eqnarray*}
u_{\vec{a}}(AW_B(\vec{a}, \tilde{\vec{a}}))
\geq u_{\tilde{\vec{a}}}(AW_B(\vec{a}, \tilde{\vec{a}})) - \epsilon \geq 1/2 - \epsilon
\end{eqnarray*}
It follows that $1/2 \leq u_{\vec{a}}(AW_A(\vec{a}, \tilde{\vec{a}}))
\leq 1/2 + \epsilon$.
Moreover, Alice cannot use some other strategy $\vec{a}'$ to force an allocation that gives her more than $1/2 + \epsilon$;
otherwise, Bob's utility as measured by $\tilde{\vec{a}}$ under strategy profiles $(\vec{a}', \tilde{\vec{a}})$ would be strictly less than $1/2 - \epsilon$,
contradicting the envy-freeness of the procedure.

Thus when Bob's strategy is $\epsilon$-close to Alice's truthful strategy $\vec{a}$, Alice has an $\epsilon$-best response at her truthful strategy $\vec{a}$, which completes the proof of the lemma.
\end{proof}

\begin{lemma} \label{lem:Bobepsilon}
When Alice plays $\vec{a}$, Bob has an $\epsilon$-best response that is $\epsilon$-close to Alice's strategy.
\end{lemma}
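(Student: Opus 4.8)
The plan is to pin down the exact supremum of Bob's achievable utility against a truthful Alice, to show that this supremum is a simple fractional‑knapsack quantity, and then to verify that Bob can approach it arbitrarily well using reports arbitrarily close to $\vec a$. The whole lemma then combines with Lemma~\ref{lem:Aliceepsilon} to prove Theorem~\ref{thm:epsilonNash}.

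First I would establish a strategy‑independent upper bound on Bob's bundle. Since Alice reports her true valuation $\vec a$ and Adjusted Winner is envy‑free with respect to the declared valuations, Alice never envies Bob under her own valuation: $u_{\vec a}(AW_A(\vec a, \vec y)) \geq u_{\vec a}(AW_B(\vec a, \vec y))$ for every report $\vec y$. As the two bundles partition all items and $\sum_i a_i = 1$, this forces $u_{\vec a}(AW_B(\vec a, \vec y)) \leq 1/2$ no matter what Bob does. Consequently Bob's true utility is bounded by the value of the fractional knapsack
\[
V^* \;=\; \max\Big\{\textstyle\sum_{i} b_i\, w^i \;:\; \sum_i a_i\, w^i \leq \tfrac12,\ w^i \in [0,1]\Big\},
\]
whose optimum is attained by inserting items in decreasing order of the ratio $b_i/a_i$ until the $\vec a$‑budget $1/2$ is exhausted (splitting at most one item). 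Thus $V^*$ is an upper bound on $u_{\vec b}(AW_B(\vec a, \vec y))$ over all strategies $\vec y$.

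Second I would show that $V^*$ is approached by reports close to $\vec a$. I set $y_i = a_i(1+\eta_i)$ with tiny $\eta_i$, renormalized so that $\sum_i y_i = 1$, chosen so that sorting the items by decreasing $a_i/y_i$ yields exactly the increasing order of $b_i/a_i$. Then the low‑$a_i/y_i$ items, i.e.\ those Adjusted Winner assigns to Bob's side, are precisely the high‑$b_i/a_i$ items, so Bob's bundle is the knapsack suffix. Since $\vec y$ is $O(\|\vec y - \vec a\|_1)$‑close to $\vec a$, the equitability condition $u_{\vec a}(AW_A) = u_{\vec y}(AW_B)$ together with $u_{\vec a}(AW_A) = 1 - u_{\vec a}(AW_B)$ forces $u_{\vec a}(AW_B(\vec a,\vec y)) = 1/2 - O(\|\vec y - \vec a\|_1)$; and since the knapsack value is Lipschitz in its budget (the relevant ratios $b_i/a_i$ are bounded), Bob's utility lies within $O(\|\vec y - \vec a\|_1)$ of $V^*$. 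Because the required ordering depends only on the signs and relative sizes of the $\eta_i$, it is preserved under rescaling, so I may shrink the perturbation freely: for any $\epsilon>0$ I can keep $\vec y$ within $\epsilon$ of $\vec a$ (indeed within $\epsilon/m$ per coordinate, as Lemma~\ref{lem:Aliceepsilon} wants) while making the gap to $V^*$ at most $\epsilon$. As $V^*$ is the supremum, this $\vec y =: \tilde{\vec a}$ is an $\epsilon$‑best response that is $\epsilon$‑close to $\vec a$, which is exactly the claim; feeding it into Lemma~\ref{lem:Aliceepsilon} then makes Alice's truthful report an $\epsilon$‑best response too.

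I expect the main obstacle to be the second step: one must argue carefully that an arbitrarily small, correctly structured perturbation can simultaneously (i) realize the precise knapsack ordering on Bob's side despite the near‑ties in the $a_i/y_i$ values and the lexicographic tie‑breaking, and (ii) keep the equitability boundary within $O(\epsilon)$ of $\vec a$‑value $1/2$, so that the single split item and the $O(\epsilon)$ budget slack contribute only an $O(\epsilon)$ loss rather than an $O(1)$ one.
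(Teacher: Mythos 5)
Your proposal is correct, and it certifies Bob's near-optimality by a different route than the paper. The paper's proof never identifies the value of Bob's best response: it fixes an arbitrary permutation $\pi$, shows (Equation~\ref{eq:idealoutcomepi}) that Bob can approximate the corresponding equitable split by a small perturbation of $\vec{a}$ that induces $\pi$ strictly, and then argues that any $\epsilon$-best response of Bob, inducing some permutation, can be replaced by a strategy $\epsilon$-close to $\vec{a}$ inducing the same permutation, which is weakly better ``as the boundary line moves to the left.'' You use the same key device --- a tiny multiplicative perturbation $y_i = a_i(1+\eta_i)$ whose relative sizes fix the induced order, with equitability plus closeness forcing $u_{\vec{a}}(AW_B) = 1/2 - O(\|\vec{y}-\vec{a}\|_1)$ --- but you replace the comparison against an arbitrary best response with an explicit, strategy-independent cap: by envy-freeness with respect to Alice's declared (true) valuation, Bob's bundle always has $\vec{a}$-value at most $1/2$, hence his utility is at most the fractional-knapsack value $V^*$, and your perturbation targeting the decreasing $b_i/a_i$ order attains $V^*$ up to $O(\|\vec{y}-\vec{a}\|_1)$ by Lipschitzness of the knapsack value in its budget. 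This buys two things: it pins down the exact supremum of Bob's achievable utility (so the $\epsilon$-best-response claim is certified against a concrete number rather than against an unspecified best response), and it makes explicit which permutation is optimal --- the same ratio order that informed tie-breaking selects, which connects cleanly to Lemma~\ref{lem:tie} and Theorem~\ref{thm:continuousInformedExactNash}. The paper's argument, in turn, avoids the knapsack machinery and treats all permutations uniformly, at the cost of the somewhat terse boundary-monotonicity step. Both proofs must, and do, sidestep lexicographic tie-breaking by making the perturbed ratios strictly distinct.
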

\begin{proof}
%Let $\vec{z}$ be an arbitrary $\epsilon$-best response of Bob to Alice's strategy $\vec{a}$. Assume that $\vec{z}$ is \emph{far} from Alice's strategy; that is,
%$||\vec{z} - \vec{a}|| > \epsilon$.
%
Let $\pi = (\pi_1, \ldots, \pi_m)$ be a fixed permutation of the items. %by strategy profiles $(\vec{a}, \vec{z})$. 
Then there exist uniquely defined index $l \in \{1, \ldots, m\}$ and $\lambda \in [0,1)$ such that 
\begin{equation} \label{eq:idealoutcomepi}
a_{\pi_1} + \ldots a_{\pi_{l-1}} + \lambda a_{\pi_l} = \frac{1}{2} = (1 - \lambda) a_{\pi_l} + a_{\pi_{l+1}} + \ldots + a_{\pi_m}
\end{equation}
Note that Adjusted Winner uses lexicographic tie breaking to sort the items when there exist equal ratios $x_i/y_i = x_j/y_j$, for some $i\neq j$. Thus
the order $\pi$ may never appear in an outcome of the procedure when the players use the same strategies.

However, we show that Bob can approximate the outcome of Equation (\ref{eq:idealoutcomepi}) arbitrarily well. We have two cases: \\

\textbf{\emph{Case 1}}: $\lambda \in (0,1)$.
Then there exist $\epsilon_1, \ldots, \epsilon_m$ such that the following conditions hold: 
\begin{description}
\item[$(i)$] $|\epsilon_j| < \min\left( \frac{\epsilon}{m}, \frac{2\lambda a_{\pi_l}}{m}\right)$, for all $j \in [m]$,
\item[$(ii)$] the items are strictly ordered by $\pi$: $\frac{a_{\pi_1}}{a_{\pi_1} + \epsilon_{\pi_1}} > \ldots > \frac{a_{\pi_m}}{a_{\pi_m} + \epsilon_{\pi_m}}$, 
\item[$(iii)$] $\sum_{j=1}^{m} \epsilon_j = 0$, and 
\item[$(iv)$] it's still item $\pi_l$ that gets split, in a fraction $\delta \in (0,1)$ close to $\lambda$; that is,
$|\lambda - \delta| < \frac{\epsilon}{b_{\pi_l}}$. 
\end{description}

Informally, Bob plays a perturbation of Alice's truthful strategy inducing ordering $\pi$ on the items (with no ties) and splits item $\pi_l$
in a fraction close to $\lambda$. \\

\textbf{\emph{Case 2}}: $\lambda = 0$. 
Again, there exist $\epsilon_1, \ldots, \epsilon_m$ such that the following conditions are met:
\begin{description}
\item[$(i)$] $\epsilon_j < \min\left(\frac{\epsilon}{m}, \frac{a_{\pi_l}}{m}\right)$ for all $j \in [m]$, 
\item[$(ii)$] the item order is $\pi$:
$\frac{a_{\pi_1}}{a_{\pi_1} + \epsilon_{\pi_1}} > \ldots > \frac{a_{\pi_m}}{a_{\pi_m} + \epsilon_{\pi_m}}$,
\item[$(iii)$] $\sum_{j=1}^{m} \epsilon_j = 0$, and 
\item[$(iv)$] item $\pi_l$ is split in a ratio $\delta$ close to zero: $|\delta| < \frac{\epsilon}{b_{\pi_l}}$.
\end{description}

Thus Bob can approximate the outcome of Equation (\ref{eq:idealoutcomepi}). 

Now consider any $\epsilon$-best response $\vec{y}$ of Bob; this induces some permutation of the items according to the ratios. If $\vec{y}$ is $\epsilon$-close to the strategy of Alice we are done. Otherwise, Bob could change his strategy to be $\epsilon$-close to the strategy of Alice while inducing the same permutation. This will only improve his utility as the boundary line moves to the left.
\end{proof}

It can be observed that there is at least one other $\epsilon$-Nash equilibrium, at strategies $(\vec{b}, \tilde{\vec{b}})$, where $\tilde{\vec{b}}$ is a perturbation of Bob's truthful profile.

\subsubsection{Discrete Strategies}

Even though the continuous procedure is not guaranteed to have pure Nash equilibria, this does not imply that the discrete variant should also fail to have pure Nash equilibria.
However we do find that this is indeed the case. 
%%%\Frowny{}.

\begin{theorem}\label{disc:no-PNE}
Adjusted Winner with discrete strategies is not guaranteed to have pure Nash equilibria.
\end{theorem}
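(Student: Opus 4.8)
The plan is to reuse the two-item instance underlying Theorem~\ref{cont:no-PNE}, but now placed on the discrete grid of multiples of $1/P$. I would fix the true valuations as well-separated grid rationals satisfying $b_1 > a_1 > a_2 > b_2 > 0$ (for instance $\vec{a} = (3/5, 2/5)$ and $\vec{b} = (4/5, 1/5)$) and take the number of points $P$ to be a large multiple of their common denominator. Since the discrete strategy space is then finite, it suffices to exhibit, for every strategy profile $(\vec{x}, \vec{y}) = ((x,1-x),(y,1-y))$ with $x,y$ on the grid, a profitable deviation that also lies on the grid. I would organize the argument along the same four-case split ($x\neq y$; $x=y<1/2$; $x=y>1/2$; $x=y=1/2$) as in the continuous proof, and then argue that each improving deviation survives the discretization.

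The cases where the two reports differ ($x\neq y$) or coincide strictly below the midpoint ($x=y<1/2$) only require moving one report by a single grid step $1/P$ in the direction that shifts the equitability boundary favorably. Such a step is always available: in the analogue of Case~1 the two reports differ by at least $1/P$, and in the analogue of Case~2 there is room below $1/2$. Moreover the step strictly improves the deviating player, because the relevant valuation gaps — such as $(b_1-b_2)/b_2$ in the analogue of Case~4, or the separation between the thresholds in Case~1 — are fixed positive constants that exceed $1/P$ once $P$ is large.

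The crux is the analogue of Case~3 ($x=y>1/2$), where the continuous proof obtained a contradiction by letting the perturbation $\delta\to 0$. On the grid I no longer have this freedom: stability forces $\delta \geq \max\bigl(\frac{4x(x-a_1)}{2x-a_1},\ \frac{4x(b_1-x)}{2x-b_1}\bigr)$ for every admissible grid deviation. The key observation is that, because $b_1 > a_1$, at least one of these two thresholds is strictly positive at every grid point $x>1/2$; and since the valuations are fixed, these finitely many positive thresholds are bounded below by a constant. Choosing $P$ so that the grid step $1/P$ lies strictly below this constant makes the single-step deviation $\delta = 1/P$ violate the equilibrium condition, so some player strictly improves.

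The hard part — and where discreteness genuinely bites — is exactly this quantification: I must verify that $1/P$ lands strictly inside the required intervals simultaneously for all of the finitely many relevant grid points, and I must rule out degenerate grid deviations that either land exactly on the midpoint or create an equal ratio $x_i/y_i$ that lexicographic tie-breaking then resolves against the deviator. I would dispose of these by taking $P$ \emph{odd}, which removes $1/2$ from the grid and eliminates the Case~4 boundary entirely, together with a final check that single-step deviations keep item~$1$ as the fractionally split item so that no adverse tie is introduced. With these in place, no profile is stable, and the chosen discrete instance admits no pure Nash equilibrium.
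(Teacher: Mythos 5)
Your approach has a fatal gap, and it sits exactly at the point you flag as delicate: profiles in which the two reports differ by a single grid step. When $|x-y|=1/P$, the ``single grid step in the favorable direction'' that drives your Case~1 lands on $x'=y$, not strictly between the reports, so the deviation is governed by lexicographic tie-breaking rather than by a continuous boundary shift. These adjacent-report profiles are not merely awkward to analyze---some of them \emph{are} pure Nash equilibria, so the two-item instance you propose cannot witness the theorem. Concretely, take $\vec{a}=(3/5,2/5)$, $\vec{b}=(4/5,1/5)$, $P=15$, and the profile $\vec{x}=(10/15,5/15)$, $\vec{y}=(11/15,4/15)$. The outcome gives Alice item~2 plus $2/7$ of item~1 and Bob $5/7$ of item~1, so both have true utility $4/7$. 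Enumerating Alice's $14$ deviations: reports below $11/15$ keep her on the item-2 side with utility increasing in the report, maximized at her current report; matching $11/15$ triggers lexicographic tie-breaking, which hands her only $15/22$ of item~1 (utility $9/22<4/7$); reports above $11/15$ give her at most $9/23$. Symmetrically, Bob's alternatives yield at most $2/5$ (matching) or $7/19$ (undercutting), both below $4/7$. Neither player can improve. This is not an accident of the numbers: for any grid point $x\in[a_1,\,b_1-1/P]$, the profile $((x,1-x),(x+1/P,1-x-1/P))$ is a pure Nash equilibrium, because Alice's matching deviation fails whenever $\frac{a_1}{2x+2/P}+\frac{a_1}{2x+1/P}<1$ (guaranteed by $x\geq a_1$), Bob's matching deviation fails whenever $\frac{b_1}{2x}+\frac{b_1}{2x+1/P}>1$ (guaranteed by $x+\frac{1}{2P}\leq b_1$), and all non-matching deviations are monotone in the report and dominated by the current or matching one. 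So taking $P$ large (odd or not) \emph{creates} equilibria rather than destroying them; your careful Case~3 quantification is moot because the stable profiles are not in Case~3 at all.

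The failure is structural rather than an artifact of your chosen valuations: with two items, discrete Adjusted Winner always admits a pure Nash equilibrium (the paper's source even contains a commented-out theorem asserting exactly this), so no two-item instance can prove the statement, and the continuous argument of Theorem~\ref{cont:no-PNE} cannot survive discretization since it fundamentally relies on arbitrarily small perturbations. The paper's own proof is of an entirely different character: it exhibits a four-item, seven-point instance with valuations $(1,1,2,3)$ and $(2,3,1,1)$ and verifies by exhaustive computer search over all strategy profiles that no pure Nash equilibrium exists. Any repair of your approach would have to start from an instance with more than two items.
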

\begin{proof}
	Consider a game with $4$ items and $7$ points, where Alice and Bob have valuations $(1, 1, 2, 3)$ and $(2,3,1,1)$, respectively.
	This game does not admit a pure Nash equilibrium; this fact can be verified with a program that checks all possible configurations.
%%%%	\haris{Can we argue for this analytically?-->> SIMINA:: SORRY, I DON'T SEE HOW TO DO THAT}
	\end{proof}

% \begin{example}[Instance with no pure Nash equilibrium]
% Consider a game with $4$ items and $7$ points for each player, where Alice and Bob have valuation profiles $[1, 1, 2, 3]$ and $[2,3,1,1]$, respectively.
% This game does not admit a pure Nash equilibrium; this fact can be verified with a computer program that checks all the possible configurations.
% \haris{Can we argue for this analytically?r}
% \end{example}

\begin{comment}
In the case of two items however, an exact equilibrium always exists.

\begin{theorem}\label{th:pne-2items}
In the case of two items, Adjusted Winner always has a pure Nash equilibrium.
\end{theorem}  
\end{comment}

Our next theorem shows that an $\epsilon$-Nash equilibrium always exists in the discrete case if the number of points is set adequately, such that the players can approximately represent their true valuations.

\begin{theorem} \label{thm:epsilonNashDiscrete}
For any profile $(\vec{a}, \vec{b})$ and any $\epsilon > 0$, there exists $P'$ such that the procedure has an $\epsilon$-Nash equilibrium when the players are given $P'$ points.
\end{theorem}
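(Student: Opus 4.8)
The plan is to mirror the proof of Theorem~\ref{thm:epsilonNash} for the continuous case, transporting the construction to the discrete grid of strategies that are multiples of $1/P'$ summing to $1$. As in the continuous argument, the equilibrium will have Alice play a strategy $\vec{x}$ that closely approximates her true valuation $\vec{a}$, and Bob play a strategy $\vec{y}$ that is a small perturbation of $\vec{x}$, chosen so that the induced ordering of the ratios $x_i/y_i$ is strict and the split falls on the desired item. The sole purpose of enlarging $P'$ is to make the grid fine enough to (i) approximate $\vec{a}$ within a target precision $\delta$ and (ii) realize Bob's perturbation using steps of size $1/P'$. Concretely, I would first fix $\epsilon>0$, then fix a precision $\delta$ small relative to $\epsilon$ and to the instance, and finally choose $P'$ large enough that $1/P'$ lies below $\delta$ and below every gap used in the construction below.

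Next I would establish the discrete analog of Lemma~\ref{lem:Aliceepsilon}. Because Adjusted Winner is envy-free with respect to the \emph{declared} valuations in the discrete variant as well (the procedure operates identically on rational inputs), when Alice plays $\vec{x}$ she is guaranteed $u_{\vec{x}}(AW_A(\vec{x},\vec{y}))\ge 1/2$, and Bob is guaranteed $u_{\vec{y}}(AW_B(\vec{x},\vec{y}))\ge 1/2$. Since $\sum_i|x_i-a_i|<\delta$ and $\sum_i|x_i-y_i|<\delta$, re-evaluating the same bundles under Alice's true values $\vec{a}$ changes utilities by at most $O(\delta)$, giving $u_{\vec{a}}(AW_A(\vec{x},\vec{y}))\ge 1/2-O(\delta)$. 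For the upper bound, any discrete deviation $\vec{x}'$ of Alice still leaves Bob with $u_{\vec{y}}(AW_B(\vec{x}',\vec{y}))\ge 1/2$ by envy-freeness, so Bob's bundle is worth at least $1/2-O(\delta)$ under $\vec{a}$, and hence Alice's own true utility is at most $1/2+O(\delta)$ (the items are fully allocated and $\sum_i a_i=1$). Thus Alice cannot gain more than $O(\delta)\le\epsilon$ by deviating, so $\vec{x}$ is an $\epsilon$-best response.

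I would then prove the discrete analog of Lemma~\ref{lem:Bobepsilon}. Applying Equation~(\ref{eq:idealoutcomepi}) with $\vec{a}$ replaced by Alice's strategy $\vec{x}$ identifies the ideal permutation $\pi$, split index $l$, and fraction $\lambda$ that equalize Alice's declared value at $1/2$. For $P'$ large, Bob selects a grid point $\vec{y}$ with $\sum_i|x_i-y_i|<\delta$ whose ratios $x_i/y_i$ are strictly ordered according to $\pi$ and that splits item $\pi_l$ in a fraction close to $\lambda$; the strict inequalities of the continuous construction---namely the strict ordering of the ratios and the split condition---survive rounding to the grid precisely because $1/P'$ is chosen smaller than every gap appearing there. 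Finally, any discrete $\epsilon$-best response $\vec{y}'$ of Bob that is not already close to $\vec{x}$ can be replaced by a grid strategy close to $\vec{x}$ inducing the same permutation, which only moves the boundary line leftward and weakly increases Bob's utility; hence $\vec{y}$ is a discrete $\epsilon$-best response. Combining the two discrete lemmas exactly as in Theorem~\ref{thm:epsilonNash} shows that $(\vec{x},\vec{y})$ is an $\epsilon$-Nash equilibrium of the $P'$-point game.

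The main obstacle I anticipate is the realizability step for Bob: in the continuous case arbitrarily small real perturbations are available, whereas on the grid the smallest move is $1/P'$, so I must verify that a single choice of $P'$ simultaneously makes the approximation of $\vec{a}$, the strictness of the induced ordering, and the accuracy of the split fraction all hold at once. The clean way to handle this is to carry out the continuous construction first, record the finite list of strict inequalities and positive gaps it relies on, and only then set $P'$ larger than the reciprocal of the smallest such quantity (and than $1/\delta$); a continuity argument then guarantees that rounding each coordinate to the nearest multiple of $1/P'$ perturbs every relevant ratio and utility by less than the available slack, preserving all the required strict inequalities.
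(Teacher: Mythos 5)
Your proposal is correct, and like the paper it reduces the discrete statement to the continuous one (Theorem~\ref{thm:epsilonNash}); the difference lies in how the continuous equilibrium is realized with $P'$ points. The paper exploits the fact that in the discrete setting the true profile $(\vec{a},\vec{b})$ is already rational with denominator $P$: Alice plays \emph{exactly} $\vec{a}$, each continuous perturbation $\epsilon_j$ of Bob is replaced by a rational $\epsilon_j' = q_j/r_j$ within $\epsilon/(2m)$ that preserves the induced ordering of the ratios, and then $P' = P\cdot\prod_j r_j$ makes both strategies exactly representable on the grid --- so no rounding-error analysis is needed at all, and the continuous lemmas apply directly to the profile $(\vec{a},\tilde{\vec{a}}')$ with only the $\epsilon/2 \to \epsilon$ degradation. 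You instead round \emph{both} players' strategies to a fine grid and re-derive discrete analogs of Lemmas~\ref{lem:Aliceepsilon} and~\ref{lem:Bobepsilon} with $O(\delta)$ error terms; this is heavier, but it is also slightly more general, since it never uses that $\vec{a}$ is rational. Your error analysis for Alice (lower bound $1/2 - O(\delta)$ from envy-freeness with respect to declared valuations, upper bound $1/2 + O(\delta)$ under any deviation via Bob's declared guarantee) is sound, and your ``record the finite list of gaps, then choose $P'$ below the smallest one'' continuity argument is the right way to preserve the strict ordering and the split condition for Bob. One detail you should add: rounding each coordinate to the nearest multiple of $1/P'$ need not yield a vector summing to exactly $1$ (i.e., to exactly $P'$ points) with strictly positive entries, so the leftover must be absorbed into some coordinate without disturbing the strict inequalities; this is routine and does not affect the overall argument.
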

\begin{proof}
Let $\epsilon > 0$, and consider any profile $(\vec{a}, \vec{b})$ with denominator $P$. Then if we interpret $(\vec{a}, \vec{b})$ as a profile for the continuous setting, we get a $\epsilon/2$-Nash equilibrium $(\vec{a}, \vec{\tilde{a}})$ from Theorem~\ref{thm:epsilonNash}, where $\tilde{a}_j = a_j + \epsilon_j$, for all $j \in [m]$. 

Recall that $a_j, b_j \in \mathbb{Q}$; where $a_j = \frac{s_j}{P}$ and $b_j = \frac{t_j}{P}$, for some $s_j, t_j, \in \mathbb{N}$. We can find a rational number
$\epsilon'_j = \frac{q_j}{r_j}$ (with $q_j, r_j \in \mathbb{N}$) that approximates $\epsilon_j$ within $\frac{\epsilon}{2m}$ for each $j \in [m]$,
and such that the ordering of the items induced by the ratios $\frac{a_j}{a_j + \epsilon_j}$ is the same as the one given by $\frac{a_j}{a_j + \epsilon'_j}$.
Define $\vec{\tilde{a}}'$ such that $\tilde{a}'_j = a_j + \epsilon'_j$.

It follows that $(\vec{a}, \vec{\tilde{a}}')$ is an $\epsilon$-Nash equilibrium with $a_j, \tilde{a}'_j \in \mathbb{Q}$, for all $j \in [m]$. Thus
whenever the players have a denominator of $P' = P \cdot \prod_{j=1}^{m} r_j$, the strategy profiles $(\vec{a}, \vec{\tilde{a}}')$ can be represented in the discrete procedure, so by giving $P'$ points to the players, there exists an $\epsilon$-Nash equilibrium.
\end{proof}

%%% SIMINA ::: NO, ACTUALLY I DON'T THINK SO%%%
%\begin{conjecture}
%Don't we get an exact equilibrium actually for the discrete case, by setting the number of coins large enough?
%\end{conjecture}

\subsection{Informed Tie-Breaking}

If the tie-breaking rule is not independent of the valuations, then both the discrete and continuous variants of Adjusted Winner have exact pure Nash equilibria. The deterministic tie-breaking rule under which this is possible 
is the one in which a fixed player (e.g. Bob), is allowed to resolve ties by sorting them in the best possible order for him. That is, Bob evaluates all ways of sorting the items with ties and picks the ordering that maximizes his utility, according to his true valuation function. If there are multiple such orderings, Bob can without loss of generality select any of them.

\begin{comment}
Bob can compute the optimal order as outlined in the next definition.

\begin{definition}[Informed Tie-Breaking]
Let there be a fixed player, for example Bob. Given any strategies $(\vec{x}, \vec{y})$,
for each permutation $\pi$, let $l_{\pi} \in [m]$ and $\lambda_{\pi} \in [0, 1)$ be the uniquely defined item and fraction for which:
$$x_{\pi_1} + \ldots x_{\pi_{l-1}} + \lambda x_{\pi_l} 
= (1 - \lambda) y_{\pi_{l}} + y_{\pi_{l+1}} + \ldots + y_{\pi_m}$$
Let $\pi^{*}$
be an optimal permutation with respect to $(\vec{x}, \vec{y})$, namely
$\pi^{*} \in \arg \max_{\pi} (1 - \lambda)y_{\pi_l} + y_{\pi_{l+1}} + \ldots + y_{\pi_m}$.
Then under \emph{informed tie-breaking}, the procedure resolves ties in the order given by $\pi^{*}$.
\end{definition}

Note that there might be more than one choice of $\pi^*$ and Bob picks any fixed one.  
\end{comment}
Now we can state the equilibrium existence theorems.

\begin{theorem} \label{thm:continuousInformedExactNash}
Adjusted Winner with continuous strategies and informed tie-breaking is guaranteed to have a pure Nash equilibrium.
\end{theorem}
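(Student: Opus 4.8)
The plan is to exhibit an explicit pure Nash equilibrium, namely the profile in which both players report Alice's true valuation: $\vec{x} = \vec{a}$ and $\vec{y} = \vec{a}$. The intuition is that informed tie-breaking is exactly what repairs the non-existence exposed in Theorem~\ref{cont:no-PNE}: there, a deviating player wanted to push the boundary line infinitesimally in his favour but could only \emph{approach} the optimal split because lexicographic tie-breaking worked against him. When both players declare the same vector $\vec{a}$, every ratio $a_i/a_i$ equals $1$, all items are tied, and the informed rule lets the fixed player (Bob) realize the best split directly rather than merely approaching it.

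First I would record what the profile $(\vec{a},\vec{a})$ produces. Since both players declare $\vec{a}$, equitability forces $u_{\vec{a}}(AW_A(\vec{a},\vec{a})) = u_{\vec{a}}(AW_B(\vec{a},\vec{a})) = 1/2$, and because all ratios are tied, Bob may order the items however he likes. So among all bundles of declared value $1/2$ he selects the one maximizing his \emph{true} value $u_{\vec{b}}(\cdot)$, i.e. he places the items of highest ratio $b_i/a_i$ on his side of the boundary. Denote this optimal value $V^{*} = \max\{\, u_{\vec{b}}(S) : u_{\vec{a}}(S) \le 1/2 \,\}$; since $u_{\vec{b}}$ is monotone in the bundle, the maximum is attained with declared value exactly $1/2$, so Bob indeed obtains $V^{*}$ at the candidate profile.

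Next I would verify that neither player can profitably deviate. For Bob: when Alice plays $\vec{a}$, any report $\vec{y}'$ yields an outcome that is envy-free with respect to the declared profile $(\vec{a},\vec{y}')$; Alice's no-envy condition (measured by her declared $\vec{a}$) gives $u_{\vec{a}}(AW_A) \ge u_{\vec{a}}(AW_B)$, hence $u_{\vec{a}}(AW_B) \le 1/2$. Consequently $u_{\vec{b}}(AW_B) \le V^{*}$, so no deviation beats the equilibrium. For Alice: when Bob plays $\vec{a}$, any report $\vec{x}'$ again gives an envy-free outcome, and now Bob's no-envy condition (measured by his declared $\vec{a}$) gives $u_{\vec{a}}(AW_B) \ge u_{\vec{a}}(AW_A)$, so $u_{\vec{a}}(AW_A) \le 1/2$; since Alice's true value of her bundle is exactly $u_{\vec{a}}(AW_A)$ and equals $1/2$ at the candidate profile, she also cannot improve. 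Thus $(\vec{a},\vec{a})$ is a pure Nash equilibrium.

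The main obstacle is the argument on Bob's side: unlike Alice, whose true utility is directly capped at $1/2$ by envy-freeness, Bob's true utility $u_{\vec{b}}$ is a priori unrelated to the declared-value constraint, so I must combine the envy-freeness bound $u_{\vec{a}}(AW_B) \le 1/2$ with the monotonicity observation that $V^{*}$ is already attained at declared value $1/2$, and confirm that informed tie-breaking lets Bob realize precisely this constrained optimum (the $b_i/a_i$-sorted bundle is an interval in some admissible ordering of the tied items). Checking that this optimal bundle is always achievable as an ordered allocation, and that the boundary placement interacts correctly with a possibly split item, is the step that needs the most care.
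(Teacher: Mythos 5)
Your proposal is correct and follows essentially the same route as the paper: both exhibit $(\vec{a},\vec{a})$ as the equilibrium, cap Alice's deviations at $1/2$ via envy-freeness with respect to Bob's declared valuation, and argue that informed tie-breaking lets Bob realize his best achievable true utility at this profile so that no deviation (itself capped by Alice's no-envy condition at declared value $1/2$) can beat it. If anything, your treatment of Bob's side—defining $V^{*}$ as the fractional-knapsack optimum over bundles of declared value at most $1/2$ and noting it is attained by the $b_i/a_i$-sorted suffix—is more explicit than the paper's, which defers this point to its earlier Lemma~\ref{lem:Bobepsilon}.
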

\begin{proof}
We show that the profile $(\vec{a}, \vec{a})$ is an exact equilibrium. By envy-freeness of the procedure, Alice gets at least half of the points at this strategy profile. Moreover, she cannot get strictly above 
half, since that would violate envy-freeness from the point of view of Bob's declared valuation, which is also $\vec{a}$. Thus Alice's strategy is a best response.
As argued in Theorem~\ref{thm:epsilonNash} and ~\ref{thm:epsilonNashDiscrete}, there exists an optimal permutation $\pi^*$ such that by playing $\vec{a}$ and sorting the items in the order $\pi^*$, Bob can obtain the best possible 
utility (and as mentioned in Lemma~\ref{lem:Bobepsilon}, this value is achievable at these strategies).
\end{proof}

Similarly, it can be shown that the strategy profile $(\vec{a}, \vec{a})$ is a pure Nash equilibrium in the discrete procedure.
\begin{theorem} \label{thm:discreteInformedExactNash}
Adjusted Winner with discrete strategies and informed tie-breaking is guaranteed to have a pure Nash equilibrium.
\end{theorem}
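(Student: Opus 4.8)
The plan is to show that the truthful profile $(\vec{a}, \vec{a})$, in which both players declare Alice's true valuation, is an exact pure Nash equilibrium, mirroring Theorem~\ref{thm:continuousInformedExactNash}. The key observation is that $\vec{a}$ is itself a legal discrete strategy---it is Alice's true valuation, a rational vector with denominator $P$---so the profile is feasible in the discrete procedure. Moreover, every ingredient of the continuous argument (envy-freeness and equitability with respect to the \emph{declared} valuations, and the realizability of an arbitrary item ordering at an all-tied profile) holds verbatim in the discrete setting, since none of these facts depends on the strategy space being continuous.

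First I would handle Alice. At $(\vec{a}, \vec{a})$ both players declare $\vec{a}$, so equitability gives $u_{\vec{a}}(AW_A(\vec{a},\vec{a})) = u_{\vec{a}}(AW_B(\vec{a},\vec{a})) = 1/2$. For any discrete deviation $\vec{x}'$, envy-freeness with respect to Bob's declared valuation (still $\vec{a}$) forces $u_{\vec{a}}(AW_B(\vec{x}', \vec{a})) \geq 1/2$, hence $u_{\vec{a}}(AW_A(\vec{x}', \vec{a})) \leq 1/2$. Thus Alice cannot exceed the $1/2$ she already receives, so reporting $\vec{a}$ is a best response.

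Next I would bound Bob's achievable utility. For any discrete deviation $\vec{y}$, envy-freeness again forces $u_{\vec{a}}(AW_A(\vec{a},\vec{y})) \geq 1/2$, so the bundle $B'=AW_B(\vec{a},\vec{y})$ obeys $u_{\vec{a}}(B') \leq 1/2$. Consequently $u_{\vec{b}}(B')$ is at most the optimum of the fractional program $\max\{u_{\vec{b}}(B) : u_{\vec{a}}(B) \leq 1/2\}$; since all $b_i > 0$ this optimum is attained with the budget filled (at $u_{\vec{a}}(B) = 1/2$) and is solved by the fractional-knapsack rule that selects items in decreasing order of $b_i/a_i$, splitting at most one item. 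This optimal bundle is realized as Bob's (suffix) portion in Adjusted Winner under the ordering that places the selected items last, and since at $(\vec{a}, \vec{a})$ all ratios $a_i/a_i$ coincide, informed tie-breaking permits Bob to select exactly this ordering and, as noted after Lemma~\ref{lem:Bobepsilon}, realize its value. Hence $u_{\vec{b}}(AW_B(\vec{a},\vec{a})) \geq u_{\vec{b}}(B')$ for every deviation $\vec{y}$, so reporting $\vec{a}$ is also a best response for Bob.

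The only point requiring care---and the one I would state most explicitly---is that passing from the continuous to the discrete setting weakens no step: the upper bound on Bob's utility comes solely from envy-freeness (valid for arbitrary declared valuations), while the matching value is achieved by a \emph{single} legal discrete profile, namely $(\vec{a},\vec{a})$, under informed tie-breaking. Because the supremum over Bob's deviations is attained exactly at $(\vec{a},\vec{a})$, there is no gap to close and the equilibrium is exact rather than approximate. Combining the two best-response arguments shows that $(\vec{a}, \vec{a})$ is a pure Nash equilibrium.
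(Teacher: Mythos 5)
Your proof is correct, and it arrives at the same equilibrium profile $(\vec{a},\vec{a})$ as the paper, but by a genuinely different route. The paper's own proof is a two-line reduction: by Theorem~\ref{thm:continuousInformedExactNash}, $(\vec{a},\vec{a})$ is a pure Nash equilibrium of the \emph{continuous} game with informed tie-breaking, and since the discrete strategy space is a subset of the continuous one (while $\vec{a}$ itself remains feasible as a discrete report, being rational with denominator $P$), every discrete deviation is also a continuous deviation and hence cannot be improving. You instead verify the equilibrium property from scratch inside the discrete game: Alice's side via envy-freeness and equitability with respect to the declared valuations, exactly as in the continuous proof, and Bob's side via an explicit fractional-knapsack characterization --- the largest $u_{\vec{b}}$-value of any bundle $B$ with $u_{\vec{a}}(B)\leq 1/2$ is attained by the greedy ordering in $b_i/a_i$, and informed tie-breaking at the all-tied profile $(\vec{a},\vec{a})$ realizes precisely this optimum as Bob's suffix. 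What your approach buys: it is self-contained, and the knapsack step makes rigorous what the paper's continuous proof only sketches (that Bob's utility at $(\vec{a},\vec{a})$ under informed tie-breaking upper-bounds his utility under \emph{every} deviation, so the equilibrium is exact). What the paper's approach buys: brevity, plus the reusable observation that a Nash equilibrium survives any restriction of the strategy spaces as long as the equilibrium profile itself remains feasible --- the feasibility point you also rightly flagged as the only thing needing care.
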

%%%%SIMINA::: ADD BACK IF ENOUGH SPACE
\begin{comment}
\begin{proof}
Consider the strategy profile $(\vec{a}, \vec{a})$. From Theorem~\ref{thm:continuousInformedExactNash}, this is a Nash equilibrium in the continuous case. Since the strategy space in the discrete
procedure is more restricted, there are no improving deviations here either, and so the theorem follows. 
\end{proof}
\end{comment}

\section{Efficiency and Fairness of Equilibria}

Having examined the existence of pure Nash equilibria in Adjusted Winner, 
we now study their fairness and efficiency. For fairness, we observe that following.
\begin{theorem}\label{thm:ne-ef}
All the pure Nash equilibria of Adjusted Winner are envy-free with respect to true valuations of the players.
\end{theorem}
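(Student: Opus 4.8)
The plan is to reduce envy-freeness with respect to the true valuations to a proportionality statement and then derive that statement from the equilibrium condition. Since Adjusted Winner allocates every item completely, the allocation $(W_A, W_B) = AW(\vec{x}, \vec{y})$ produced at a pure Nash equilibrium satisfies $w_{A}^{i} + w_{B}^{i} = 1$ for each item $i$, so $u_{\vec{a}}(W_A) + u_{\vec{a}}(W_B) = \sum_i a_i = 1$ and likewise $u_{\vec{b}}(W_A) + u_{\vec{b}}(W_B) = 1$. Consequently, Alice does not envy Bob under her true valuation if and only if $u_{\vec{a}}(W_A) \geq 1/2$, and symmetrically Bob does not envy Alice if and only if $u_{\vec{b}}(W_B) \geq 1/2$. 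It therefore suffices to prove that at any pure Nash equilibrium each player receives at least half of the total value under her own true valuation.

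The key step is to exhibit a deviation that guarantees each player her proportional share. I would argue that Alice can always obtain at least $1/2$ by deviating to report her true valuation $\vec{a}$, which is a legitimate strategy (positive entries summing to $1$). Against any fixed $\vec{y}$, the procedure is envy-free with respect to the \emph{declared} valuations; since Alice's declared valuation under this deviation equals her true valuation, envy-freeness gives $u_{\vec{a}}(AW_A(\vec{a}, \vec{y})) \geq u_{\vec{a}}(AW_B(\vec{a}, \vec{y}))$, and combining with $u_{\vec{a}}(AW_A(\vec{a}, \vec{y})) + u_{\vec{a}}(AW_B(\vec{a}, \vec{y})) = 1$ yields $u_{\vec{a}}(AW_A(\vec{a}, \vec{y})) \geq 1/2$. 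This is exactly the proportionality-from-envy-freeness argument already used in the proof of Lemma~\ref{lem:Aliceepsilon}.

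Finally, I would invoke the equilibrium condition. Because $(\vec{x}, \vec{y})$ is a pure Nash equilibrium, Alice cannot strictly improve by deviating to $\vec{a}$, so her equilibrium utility satisfies $u_{\vec{a}}(W_A) \geq u_{\vec{a}}(AW_A(\vec{a}, \vec{y})) \geq 1/2$. The identical argument applied to Bob, deviating to his true valuation $\vec{b}$, gives $u_{\vec{b}}(W_B) \geq 1/2$. By the reduction of the first paragraph, neither player envies the other under the true valuations, which proves the theorem.

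I expect no serious obstacle here: the argument is short and relies only on the envy-freeness of the procedure with respect to declared valuations together with the definition of Nash equilibrium. The one point requiring care is the observation that truthful reporting is a feasible deviation and that, under it, the declared-value guarantee transfers verbatim into a true-value guarantee — essentially the same move as in Lemma~\ref{lem:Aliceepsilon}. It is worth noting that the proof uses only that each player can secure her proportional share by a truthful deviation, so it applies uniformly across the continuous and discrete variants and under any tie-breaking rule.
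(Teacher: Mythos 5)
Your proof is correct and takes essentially the same route as the paper's: truthful reporting is a feasible deviation that secures each player at least half of her true value (via envy-freeness with respect to declared valuations), the Nash equilibrium condition transfers this guarantee to the equilibrium allocation, and since the entire value is distributed, a $1/2$ guarantee is equivalent to envy-freeness. The paper states this argument more tersely, but your spelled-out version — including the explicit reduction of envy-freeness to proportionality and the observation that the argument is tie-breaking-independent — fills in exactly the steps the paper leaves implicit.
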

\begin{proof}
Each player is guaranteed at least $50\%$ of the maximum utility by playing truthfully, regardless of what the other player does.
Since a player always has truthful reporting as a possible strategy, it must be the case that any equilibrium
outcome guarantees $50\%$ as well. Since the total utility is $100\%$, the allocation is envy-free.
\end{proof}

For efficiency, we use the well known measure of the \emph{Price of Anarchy}~\cite{koutsoupias1999worst, AGT_book}.

The \emph{social welfare} of an allocation $W$ is defined as the sum of the players' utilities: $$SW(W) = u_A(W_A) + u_B(W_B).$$

Then the Price of Anarchy is defined as the social welfare achieved in the outcome of Adjusted Winner (when the players are not strategic) over the social welfare achieved in the worst pure Nash equilibrium of the procedure and measures the deterioration of the welfare due to the strategic behaviour of the players.
% \aris{The optimal allocation can be outputted by Adjusted Winner for some input strategies $(\mathbf{x,y})$. Additionally, every minimally fractional allocation can be outputted for some strategies.}
Our main findings are that when the procedure is equipped with an informed tie-breaking rule $(i)$ all the pure Nash equilibria are Pareto optimal with respect to the true 
valuations and $(ii)$ the price of anarchy is constant; that is, each pure Nash
equilibrium achieves at least 75\% of the truthful social welfare.\footnote{Note that the \emph{optimal} welfare is not necessarily achievable in any outcome of the procedure; however, any Nash equilibrium of Adjusted Winner also attains a 50\% fraction of the optimal since the equilibria are envy-free with respect to the true valuations.}

We start with a lemma.

%In this section, we study the efficiency of the pure Nash equilibria of AW. We prove that when the procedure is equipped with an informed tie-breaking rule, all the pure Nash equilibria are %Pareto optimal with respect to the true values $\mathbf{a}$ and $\mathbf{b}$ of Alice and Bob respectively. In turn, this result has a direct implication on the Price of Anarchy of the %procedure.  

%\simina{What does $\tau$ mean? I thought we somehow defined $\tau$ uniquely?}

%\simina{Define at the beginning (and make sure this also holds in the existence sections) informed tie-breaking as Bob breaking ties in his favor, and choosing one possibility -- any among them - for further breaking ties}.

\begin{lemma}\label{lem:tie}
Let $\mathbf(\vec{x},\vec{x})$ be a pure Nash equilibrium of Adjusted Winner with informed tie-breaking and let $\pi^*$ be the permutation that Bob chooses. Then, among all possible permutations, $\pi^*$ maximizes Alice's utility. 
\end{lemma}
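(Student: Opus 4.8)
The plan is to argue by contradiction through a unilateral deviation of Alice, exploiting the fact that once Alice breaks the ties she removes Bob's tie-breaking power entirely. First I would fix notation. Since both players report $\vec{x}$, every ratio $x_i/x_i$ equals $1$, so all items are tied and informed tie-breaking lets Bob choose the sorting order. For each permutation $\pi$ of the items, let $W_A(\pi)$ and $W_B(\pi)$ be the two bundles obtained by ordering the items according to $\pi$ and placing the boundary at the point where each side has declared value exactly $1/2$; this is well defined because both declarations equal $\vec{x}$ and the declared values sum to $1$. Bob's choice is then some $\pi^{*} \in \argmax_{\pi} u_{\vec{b}}(W_B(\pi))$, and at the equilibrium Alice's realized utility is exactly $u_{\vec{a}}(W_A(\pi^{*}))$.

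Next I would suppose, for contradiction, that $\pi^{*}$ does not maximize Alice's utility, so there is a permutation $\pi'$ with $u_{\vec{a}}(W_A(\pi')) > u_{\vec{a}}(W_A(\pi^{*}))$. I would have Alice deviate to a strategy $\vec{x}'$ obtained by a small perturbation of $\vec{x}$, exactly as in the construction of Lemma~\ref{lem:Bobepsilon}: choose $\vec{x}'$ so that the ratios $x'_i/x_i$ are all distinct and induce the order $\pi'$, while keeping $\vec{x}'$ a valid strategy (positive entries summing to $1$). Because the ratios are now strict, there are no ties in $AW(\vec{x}', \vec{x})$, Bob's tie-breaking is irrelevant, and the procedure is forced to sort by $\pi'$.

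The heart of the argument is then a continuity claim about the boundary placement. In $AW(\vec{x}', \vec{x})$ the boundary equalizes the declared utilities $u_{\vec{x}'}(\text{left bundle}) = u_{\vec{x}}(\text{right bundle})$; as the perturbation shrinks, $\vec{x}' \to \vec{x}$ and this condition converges to $u_{\vec{x}}(\text{left}) = u_{\vec{x}}(\text{right}) = 1/2$, which is precisely the boundary defining $W_A(\pi')$. Hence Alice's deviation bundle converges to $W_A(\pi')$, and since $u_{\vec{a}}$ is continuous, her utility converges to $u_{\vec{a}}(W_A(\pi'))$. Taking the perturbation small enough (handling the subcases $\lambda \in (0,1)$ and $\lambda = 0$ exactly as in Lemma~\ref{lem:Bobepsilon}) makes Alice's utility strictly exceed $u_{\vec{a}}(W_A(\pi^{*}))$, her equilibrium payoff, contradicting the assumption that $(\vec{x}, \vec{x})$ is a pure Nash equilibrium. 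Therefore $\pi^{*}$ maximizes Alice's utility among all permutations.

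I expect the main obstacle to be the convergence bookkeeping in the third paragraph: one must verify that forcing the strict order $\pi'$ by an arbitrarily small perturbation yields a bundle arbitrarily close to $W_A(\pi')$, including that the split fraction and the split index stabilize, so that the strict gain $u_{\vec{a}}(W_A(\pi')) - u_{\vec{a}}(W_A(\pi^{*}))$ survives the limit. Since this is exactly the technical content already developed in Lemma~\ref{lem:Bobepsilon}, I would invoke that perturbation construction rather than redo the estimates, and merely note that informed tie-breaking plays no role once Alice's deviation makes all ratios distinct.
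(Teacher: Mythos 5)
Your proposal is correct and follows essentially the same route as the paper's own proof: assume a permutation $\pi'$ better for Alice, let her deviate via the small-perturbation construction of Lemma~\ref{lem:Bobepsilon} to force the order $\pi'$, and use continuity of the boundary placement to show the strict gain survives, contradicting the equilibrium assumption. Your write-up is in fact more explicit than the paper's (which only sketches the continuity step and the invocation of Section~\ref{sec:strategic}), but the underlying argument is identical.
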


\begin{proof}
	Assume by contradiction that there exists a permutation $\pi$ that gives Alice a strictly larger utility; let $\alpha$ be her marginal increase from $\pi^*$ to $\pi$. As discussed in Section \ref{sec:strategic}, Alice can find appropriate constants $\epsilon_1, \ldots, \epsilon_m$ such that $AW(\mathbf{\vec{x}',\vec{x}})$ with $\mathbf{\vec{x}'}=(x_1+\epsilon_1,\ldots,x_m+\epsilon_m)$ orders the items by $\pi$ and the allocations $AW\mathbf{(\vec{x},\vec{x})}$ and $AW\mathbf{(\vec{x}',\vec{x})}$ differ only in the allocation of the split item by by $\delta$. Moreover, by continuity of the strategies, for each $\alpha$, there exist $\epsilon_i$'s such that $\delta$ is small enough for $AW\mathbf{(x',x)}$ to be better for Alice than $AW\mathbf{(x,x)}$.   
\end{proof}

Next we show that all equilibria are Pareto optimal.

\begin{theorem}\label{ne-po}
	All the pure Nash equilibria of Adjusted Winner with informed tie-breaking are Pareto optimal with respect to the true valuations $\mathbf{a}$ and $\mathbf{b}$. 
\end{theorem}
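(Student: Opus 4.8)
The plan is to combine the characterization of non-Pareto-optimal allocations in Lemma~\ref{lemma:cond-iff-notPO} with the structure of informed-tie-breaking equilibria exposed by Lemma~\ref{lem:tie}. Consider a pure Nash equilibrium; following the analysis of this section I treat it as a symmetric profile $(\vec{x}, \vec{x})$, so that every ratio $x_k/x_k$ equals $1$, all items are tied, and for \emph{any} ordering the boundary line falls exactly where each player's declared value reaches $1/2$. Let $W$ be the equilibrium allocation and $\pi^*$ the order Bob selects; by informed tie-breaking $\pi^*$ maximizes Bob's true utility and, by Lemma~\ref{lem:tie}, it simultaneously maximizes Alice's. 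Suppose for contradiction that $W$ is not Pareto optimal with respect to $(\vec{a}, \vec{b})$. Then Lemma~\ref{lemma:cond-iff-notPO} supplies two items $i, j$ with $a_i b_j > a_j b_i$ such that Alice holds a positive fraction of $j$ and Bob a positive fraction of $i$.

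The contradiction comes from showing that one of the two players can profitably deviate. Because all items are tied, either player can, through an arbitrarily small perturbation of $\vec{x}$ of the kind built in Lemma~\ref{lem:Bobepsilon}, reorder the items so that $i$ is moved to Alice's side just past $j$ (or the reverse) and reposition the boundary, which amounts to trading a small fraction of $j$ for a small fraction of $i$ while keeping the bundle's $\vec{x}$-value fixed at $1/2$. Balancing the $\vec{x}$-value means giving up $\Delta_j$ of $j$ and gaining $\Delta_i$ of $i$ with $x_j \Delta_j = x_i \Delta_i =: \epsilon$, so Alice's true utility changes by $\epsilon(a_i/x_i - a_j/x_j)$ and Bob's, under the opposite trade, by $\epsilon(b_j/x_j - b_i/x_i)$. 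I claim at least one of these is strictly positive: if both were nonpositive we would have $a_i/x_i \le a_j/x_j$ and $b_j/x_j \le b_i/x_i$, which chain to $a_i/a_j \le x_i/x_j \le b_i/b_j$ and hence $a_i b_j \le a_j b_i$, contradicting the certificate. Thus the corresponding player has a strictly improving deviation, contradicting the equilibrium assumption, and therefore $W$ must be Pareto optimal.

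The routine part is the sign computation above, where each density $a_k/x_k$ (respectively $b_k/x_k$) is exactly the marginal true value per unit of $\vec{x}$-budget, so the greedy/knapsack intuition makes the two rates precise once the split item is handled. The main obstacle is realizability together with the reduction to symmetric profiles. For realizability I must invoke the perturbation machinery of Lemma~\ref{lem:Bobepsilon} and continuity of the strategy space to guarantee that the infinitesimal trade is achievable by a legal report and strictly improves utility (the discrete case additionally needs the denominator large enough). For the reduction, I rely on the fact that the informed-tie-breaking equilibria analyzed here are the symmetric profiles $(\vec{x}, \vec{x})$ of Theorem~\ref{thm:continuousInformedExactNash} and Lemma~\ref{lem:tie}; this symmetry is what aligns both players' budget constraints to the common vector $\vec{x}$ and lets the chaining $a_i/a_j \le x_i/x_j \le b_i/b_j$ go through. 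Handling a genuinely asymmetric profile $(\vec{x}, \vec{y})$ would force the two budgets apart and require additionally exploiting that $W$ is ordered by the declared ratios $x_k/y_k$, which is the delicate step I expect to need the most care.
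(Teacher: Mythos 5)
Your proposal is correct and takes essentially the same route as the paper's proof: both restrict to symmetric profiles $(\vec{x},\vec{x})$, invoke Lemma~\ref{lem:tie} (backed by the perturbation machinery of Lemma~\ref{lem:Bobepsilon}) to conclude that the chosen tie-breaking permutation maximizes both players' true utilities, convert that optimality into cross-boundary density comparisons between $a_i/x_i$, $a_j/x_j$ and $b_i/x_i$, $b_j/x_j$, and finish with Lemma~\ref{lemma:cond-iff-notPO}; your argument is merely the contrapositive of the paper's direct derivation, with the same silent restriction to symmetric equilibria that the paper makes. As a side note, your inequality directions are the correct ones: the displayed inequalities in the paper's proof have the two sides of the boundary inadvertently swapped (for $i$ on Alice's side and $j$ on Bob's side they should read $a_i/x_i \geq a_j/x_j$ and $b_j/x_j \geq b_i/x_i$, yielding $a_i b_j \geq a_j b_i$, which is what Lemma~\ref{lemma:cond-iff-notPO} requires).
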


\begin{proof}
	Let $(\mathbf{x,x})$ be a pure Nash equilibrium of Adjusted Winner under informed tie-breaking and let $l$ be the item that gets split (if any, otherwise the item to the left of the boundary line). Order Alice's items decreasing order of ratios $a_i/x_i$ and Bob's items in increasing order of ratios $b_i/x_i$. Since $(\mathbf{x,x})$ is a pure Nash equilibrium, by Lemma \ref{lem:tie}, both players are getting their maximum utility over all possible tie-breaking orderings of items. This means that for every item $i \leq l$ and every item $j \geq l$ with $i \neq j$, it holds that
	$$\frac{a_j}{x_j} \geq \frac{a_i}{x_i} \ \ \text{and} \ \  \frac{b_i}{x_i} \geq \frac{b_j}{x_j}  \Rightarrow \frac{a_i}{x_i}\cdot\frac{b_j}{x_j} \leq \frac{a_j}{x_j}\cdot\frac{b_i}{x_i},$$
	which by Lemma 	\ref{lemma:cond-iff-notPO}, implies that $AW(\mathbf{x,x})$ is Pareto optimal.
\end{proof}

The Pareto optimality of a strategy profile has a direct implication on the social welfare achieved at that profile.

\begin{theorem}\label{poa}
	 The Price of Anarchy of Adjusted Winner is $4/3$.
\end{theorem}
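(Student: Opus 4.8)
The plan is to prove matching upper and lower bounds, the key idea being that both the truthful outcome and every equilibrium are just boundary placements on one and the same sorted sequence. Write $T = AW(\vec{a}, \vec{b})$ for the truthful outcome; since $T$ is equitable and Pareto optimal with respect to the true valuations, $u_{\vec{a}}(T_A) = u_{\vec{b}}(T_B) = v$ for some $v \in [1/2, 1]$, and hence $SW(T) = 2v$. By Theorem~\ref{ne-po} every pure Nash equilibrium $W$ (under informed tie-breaking) is Pareto optimal with respect to $(\vec{a}, \vec{b})$, and since Adjusted Winner splits at most one item, $W$ is minimally fractional; so by the characterization following Lemma~\ref{lemma:cond-iff-notPO}, $W$ is \emph{ordered} with respect to the true ratios $a_i/b_i$. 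Crucially $T$ is ordered on the very same sorted sequence, so $T$ and $W$ differ only in where the boundary line sits. Moreover, by Theorem~\ref{thm:ne-ef}, $W$ is envy-free with respect to $(\vec{a}, \vec{b})$, giving $u_{\vec{a}}(W_A) \geq 1/2$ and $u_{\vec{b}}(W_B) \geq 1/2$. The whole statement then reduces to showing $SW(W) \geq (3/4)\,SW(T) = (3/2)\,v$.

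For the upper bound I would parametrize ordered allocations by boundary position $t$, writing $g(t) = u_{\vec{a}}(W_A)$, increasing from $0$ to $1$, and $h(t) = u_{\vec{b}}(W_B)$, decreasing from $1$ to $0$, so $SW(t) = g(t) + h(t)$. Sweeping the boundary rightwards transfers the item at the cut from Bob to Alice, so $SW'(t) = a - b$ for the values $a,b$ of that item; because items are sorted by decreasing ratio $a/b$, this derivative is nonnegative on a prefix and nonpositive on a suffix, hence $SW$ is unimodal, peaking at the welfare-maximal cut where $a = b$. The envy-free constraints $g \geq 1/2$ and $h \geq 1/2$ confine the feasible boundaries to an interval $[t_{\min}, t_{\max}]$ that contains the truthful boundary $t_T$ (as $v \geq 1/2$). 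The minimum of a unimodal function over an interval is attained at an endpoint, and at $t_{\min}$ one has $g = 1/2$ with $h(t_{\min}) \geq h(t_T) = v$, while at $t_{\max}$ one has $h = 1/2$ with $g(t_{\max}) \geq v$; in both cases $SW \geq 1/2 + v$. Since $W$ is feasible, $SW(W) \geq 1/2 + v \geq (3/2)\,v$, where the last inequality uses $v \leq 1$. Therefore $SW(T)/SW(W) \leq 2v/\bigl((3/2)v\bigr) = 4/3$.

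For tightness I would exhibit the symmetric two-item instance $\vec{a} = (a_1, 1 - a_1)$ and $\vec{b} = (1 - a_1, a_1)$ with $a_1 \to 1$. The truthful outcome awards item $1$ to Alice and item $2$ to Bob, each worth $a_1$, so $SW(T) = 2a_1$. By Theorem~\ref{thm:continuousInformedExactNash}, $(\vec{a}, \vec{a})$ is an equilibrium; there all ratios are tied, and informed tie-breaking lets Bob take item $2$ together with the fraction of item $1$ that exactly fills his equitable share, which drives Alice down to her envy-free floor $u_{\vec{a}}(W_A) = 1/2$. A direct computation then gives $SW(W) = 2 - 1/(2a_1)$, so $SW(W)/SW(T) = (4a_1 - 1)/(4a_1^2) \to 3/4$ as $a_1 \to 1$. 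Together with the upper bound, this forces the Price of Anarchy to equal $4/3$.

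The hard part will be the structural reduction rather than the arithmetic: once one sees (via Theorem~\ref{ne-po} and the ordered-allocation characterization) that every equilibrium and the truthful outcome are cuts on the same ratio-sorted line, and that welfare is unimodal along it, the inequality $1/2 + v \geq (3/2)v$ is immediate. The second delicate point is verifying that the extremal instance is realized by a genuine equilibrium attaining the envy-free floor, and not merely by some feasible ordered allocation.
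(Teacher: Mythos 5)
Your proposal is correct and rests on the same two pillars as the paper---Pareto optimality of equilibria (Theorem~\ref{ne-po}) and envy-freeness (Theorem~\ref{thm:ne-ef})---and your tight example is essentially the paper's own, mirrored (the paper takes $\vec{a}=(1-\epsilon,\epsilon)$, $\vec{b}=(\epsilon,1-\epsilon)$ with both reporting $\vec{b}$ and Alice breaking ties), justified the same way via Theorem~\ref{thm:continuousInformedExactNash}. Where you genuinely diverge is the derivation of the upper bound. The paper gets $SW(W)\geq v+1/2$ in one line: if both players did strictly worse in an equilibrium $W$ than in the truthful outcome $T$, then $T$ would Pareto-dominate $W$, contradicting Theorem~\ref{ne-po}; hence at least one player gets $\geq v$, the other gets $\geq 1/2$ by envy-freeness, and $V_A=V_B=v$ by equitability. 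You instead rebuild the geometry: every equilibrium is a cut on the ratio-sorted line, welfare is unimodal along that line, and the envy-free window $[t_{\min},t_{\max}]$ attains its welfare minimum at an endpoint, where one player sits at exactly $1/2$ and the other at $\geq v$. Your route buys some structural insight (worst equilibria sit at the envy-free boundary), but it needs more machinery and carries one imprecision the paper's argument is immune to: the claim that $T$ and $W$ are cuts on ``the very same sorted sequence'' is only guaranteed when the true ratios $a_i/b_i$ are pairwise distinct. With ties, $T$ and $W$ may order a tied block differently, and welfare genuinely depends on that order: swapping tied items $i,j$ across the cut changes welfare by $(r-1)(b_j-b_i)$, where $r$ is the tied ratio. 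This is patchable---for instance, every ordered allocation lies on the Pareto frontier, so Bob's utility is a fixed nonincreasing concave function of Alice's, and your endpoint argument runs verbatim on the frontier---but as written that step is a gap in the general case, whereas the paper's Pareto-dominance dichotomy never needs any common sorted line.
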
\begin{proof}
	Let $(\vec{x},\vec{y})$ be any  pure Nash equilibrium and let $V_A$ and $V_B$ be the utilities of Alice and Bob respectively from the outcome of Adjusted Winner on the truthful profile $(\mathbf{a,b})$, i.e. $V_A=u_A(AW(\mathbf{a,b}))$ and $V_B=u_B(AW(\mathbf{a,b}))$. Since $AW(\vec{x},\vec{y})$ is Pareto optimal by Theorem \ref{ne-po}, the allocation for at least one of the players, (e.g. Alice), is at least as good as that of the truthful outcome allocation. In other words, $u_A(AW(\vec{x},\vec{y})) \geq V_A$. On the other hand, since $AW(\vec{x},\vec{y})$ is envy-free by Theorem \ref{thm:ne-ef}, Bob's utility from $AW(\vec{x},\vec{y})$ is at least $1/2 \cdot V_B$. Overall, the social welfare of $AW(\vec{x},\vec{y})$ is at least $V_A+\frac{1}{2}V_B$. Since $V_A=V_B$ by equitability, the bound follows. %In the full version of the paper, we give an example for which the bound is tight. 
	
	The bound is (almost) tight, given by the following instance with two items. Let $\mathbf{a}=(1-\epsilon,\epsilon)$ and $\mathbf{b}=(\epsilon,1-\epsilon)$ and consider the strategy profile $\mathbf{x}=(\epsilon,1-\epsilon)$ and $\mathbf{y}=(\epsilon,1-\epsilon)$. It is not hard to see that $\mathbf{x,y}$ is a pure Nash equilibrium for Alice breaking ties. The social welfare of the truthful outcome is $2 - 2\epsilon$, in which each player receives their most preferred item. In the equilibrium allocation of Adjusted Winner, Alice wins the first item and the second item is split (almost) in half. The social welfare of the mechanism is $1+\frac{1}{2} + o(\epsilon)$ and the approximation ratio is (almost) $4/3$. As $\epsilon$ grows smaller, the ratio becomes closer to $4/3$.  
\end{proof}

\begin{remark}
Note that in fact it is possible that players have improved welfare at some Nash equilibria compared to the welfare at the truthful profile. To see this, consider an instance with $m=2$ items, where Alice has valuation profile is $(50,50)$, while Bob has valuation $(60,40)$ (without normalization).
The optimal welfare is achieved when Bob gets the first item and Alice gets the second one. However, the Adjusted Winner outcome has a welfare of $109.1$, since the protocol transfers a part of good $1$ to Alice in order to achieve equitability.
However, by providing a different input, of $(50,50)$, Bob can move the boundary line and ``restore'' the optimal welfare outcome; this is a Nash equilibrium when Bob breaks ties. 
\end{remark}
\section{Future Work}

According to Foley \cite{Foley67}, the quintessential characteristics of fairness are envy-freeness and Pareto optimality. We show that Adjusted Winner is guaranteed to have pure Nash equilibria, which satisfy both of these fairness notions. This attests to the usefulness and theoretical robustness of the procedure. A very interesting direction for future work is to study the \emph{imperfect information} setting, as the Nash equilibria studied here require the players to have full information of each other's preferences.
%\aris{Add observation about the allocation $(a,a)$ being the outcome where Alice gets the worst possible value and Bob the best possible}.
%\simina{TODO:: Discuss in detail the results for the two items case (with link to the appendix or the full version of the paper). 
%Also include all the ideas for future work, the generalization to multiple players, the connection with bargaining solutions (this appeared also in some references)}

%In this paper, we presented characterizations and results concerning the existence/non-existence of %equilibria of one of the most important fair division rules.  
%An interesting direction for future work is to consider the complexity of computing a best response as well checking the existence of pure Nash equilibria under Adjusted Winner. 
%There are various ways Adjusted Winner can be extended to multiple ways. Identifying the `right' generalization will be fruitful. Finally, it will be interesting to see how the arguments for existence and non-existence of pure Nash equilibria can be used for other mechanisms.
%\simina{Remove frowning face after it's been noticed by all the authors :-$)$}

\section{Acknowledgements}
Haris Aziz acknowledges support from NICTA, which is funded by the Australian Government through the Department of
Communications and the Australian Research Council through the ICT Centre
of Excellence Program.

Simina Br\^anzei, Aris Filos-Ratsikas, and S{\o}ren Kristoffer Stiil Frederiksen 
acknowledge support 
from the Danish National Research Foundation
and The National Science Foundation of China (under the grant 61361136003) for
the Sino-Danish Center for the Theory of Interactive Computation and from the Center for
Research in Foundations of Electronic Markets (CFEM), supported by the Danish
Strategic Research Council.

\bibliographystyle{plain}

\end{document}